\tikzstyle{max}=[shape=rectangle,draw,inner sep=0pt,minimum size=5mm,thick]
\tikzstyle{min}=[shape=diamond,draw,inner sep=0pt,minimum size=5mm,thick]
\tikzstyle{ran}=[shape=circle,draw,inner sep=0pt,minimum size=5mm,thick]
\newtheorem{theorem}{Theorem}
\newtheorem{lemma}{Lemma}
\newtheorem{fact}{Fact}
\newtheorem{corollary}{Corollary}
\newtheorem{conjecture}{Conjecture}
\newtheorem{proposition}{Proposition}
\newtheorem{assumption}{Assumption}
\theoremstyle{definition}
\newtheorem{definition}{Definition}
\newtheorem{example}{Example}
\newcommand{\coloneqq}{\mathrel{\mathop:}=}
\newcommand{\rst}[1]{\ensuremath{{\mathbin\upharpoonright}%
\raise-.5ex\hbox{$#1$}}}
\newcommand{\G}{\mathcal{G}}
\renewcommand{\H}{\mathcal{H}}
\newcommand{\V}{\mathcal{V}}
\newcommand{\eps}{\varepsilon}
\newcommand{\genGt}{\rhd}
\newcommand{\genNGt}{\ntriangleright}
\newcommand{\val}[1]{\mathit{Val}({#1})}  %
\newcommand{\run}[1]{\mathit{Run}(#1)}  %
\newcommand{\fpath}[2]{\mathit{FP}_{#1}(#2)} %
\newcommand{\Prb}[2]{\mathbb{P}_{#1}^{#2}}
\renewcommand{\Pr}[3]{\Prb{#1}{#2}\hspace{-0.16em}\left[{#3}\right]}   %
\newcommand{\Exp}{\mathbb{E}}
\newcommand{\Ex}[3]{\Exp_{#1}^{#2}\hspace{-0.16em}\left[{#3}\right]}   %
\newcommand{\len}[1]{\mathit{len}(#1)}
\newcommand{\LimInf}[2]{{(\mathit{LimInf}#1 #2)}}
\newcommand{\CN}{{\LimInf{=}{{-}\infty}}}
\newcommand{\genTran}[2]{%
    {}\mathchoice%
    {\stackrel{#1}{#2}}
    {\mathop {\smash{#2}}\limits^{\vrule width 0pt height 0pt depth 4pt\smash{#1}}}
    {\stackrel{#1}{#2}}
    {\stackrel{#1}{#2}}
{}}
\newcommand{\tran}[1]{\genTran{#1}{\rightarrow}}
\newcommand{\btran}[1]{\genTran{#1}{\hookrightarrow}}
\newcommand{\ctran}[1]{\genTran{#1}{\mapsto}}
\newcommand{\ftran}[1]{\genTran{#1}{\leadsto}}
\newcommand{\sP}[1]{#1_0}
\newcommand{\sMax}[1]{#1_1}
\newcommand{\sMin}[1]{#1_2}
\newcommand{\game}{(S,\tran{},\delta)}
\newcommand{\gameP}{(S',\btran{},\delta')} %
\title{
Optimal Strategies in
Infinite-state Stochastic
Reachability
Games
}
\author{V\'aclav Bro\v{z}ek
\thanks{Author supported by the Newton International Fellowship of Royal Society.}
\institute{LFCS, School of Informatics\\ University of Edinburgh\\ Edinburgh, Scotland, UK}
\email{Vaclav.Brozek@ed.ac.uk}
}
\begin{document}
\maketitle

\begin{abstract}
We consider perfect-information reachability stochastic
games for 2 players on infinite graphs.
We identify a subclass of such games,
and prove two interesting properties of it:
first, Player Max always has optimal
strategies in games from this subclass,
and second, these games are strongly
determined.
The subclass is defined by the property
that the set of all values can only have
one accumulation point -- 0.
Our results nicely mirror recent results
for finitely-branching games, where, on the contrary,
Player Min always has optimal strategies.
However, our proof methods are substantially different,
because the roles of the players are not symmetric.
We also do not restrict the branching of the games.
Finally, we apply our results in the context of recently
studied One-Counter stochastic games.

\end{abstract}

\section{Introduction}

Two-player turn-based zero-sum stochastic games,
simply called ``games'' in this text,
evolve randomly in discrete \emph{transitions} from one of countably many \emph{states}
to another. The winning condition is some property
of such infinite evolutions.
Each state is either owned by Player Max, Player Min,
or it is stochastic, and has a fixed set, possibly infinite,
of available outgoing transitions.
The states and transitions define a \emph{game graph},
an infinite path in this graph is called a \emph{run}.
The set of runs comes with a product topology
over the discrete state space, i.e.,
open sets are generated by sets of runs sharing a common finite prefix.
In stochastic states, the successor is sampled according
to a fixed distribution, whereas players choose successors
in states they own, based on the history of the play so far.
This induces a probabilistic measure for Borel-measurable
sets of runs in a natural way.

A winning condition is a set $W$ of runs.
A run from $W$ is won by Player Max,
the other runs are won by Player Min (the games are zero-sum).
For Borel measurable sets $W$, a fixed pair $(\sigma,\pi)$
of strategies for Player Max and Min, respectively,
and an initial state, $s$,
the probability that Max wins is denoted by
$\Pr{s}{\sigma,\pi}{W}$.
The \emph{value} of the game in $s$, denoted by $\val{s}$, is defined
as
\begin{equation}
\label{eq:det}
\val{s}
\coloneqq
\sup_\sigma
\inf_\pi
\Pr{s}{\sigma,\pi}{W}
=
\inf_\pi
\sup_\sigma
\Pr{s}{\sigma,\pi}{W}
.
\end{equation}
The above equality, a consequence of a more general, Blackwell-determinacy
result of Martin~\cite{M98}, implies that
for every $\eps>0$ both of the players have
so called \emph{$\eps$-optimal} strategies, $\sigma_\eps$
and $\pi_\eps$, such that
\(
\inf_\pi
\Pr{s}{\sigma_\eps,\pi}{W}
\geq
\val{s}-\eps
,
\)
and
\(
\sup_\sigma
\Pr{s}{\sigma,\pi_\eps}{W}
\leq
\val{s}+\eps
.
\)
This may not be true for the case when $\eps=0$, where
the optimal (i.e., $0$-optimal) strategies
may not exist for neither of the players.

We consider a stronger notion of determinacy than (\ref{eq:det}),
and call a game \emph{strongly determined} if for every state $s$,
every $\nu,\ 0\leq\nu\leq 1$,
and $\genGt\in\{>,\geq\}$
either Player Max has a strategy $\bar\sigma$
such that
\(
\forall \pi:
\Pr{s}{\bar\sigma,\pi}{W}
\genGt
\nu
,
\)
or Player Min has a strategy $\bar\pi$
such that
\(
\forall \sigma:
\Pr{s}{\sigma,\bar\pi}{W}
\genNGt
\nu
.
\)
Denote
\(
L\coloneqq
\sup_\sigma
\inf_\pi
\Pr{s}{\sigma,\pi}{W}
\)
and
\(
R\coloneqq
\inf_\pi
\sup_\sigma
\Pr{s}{\sigma,\pi}{W}
\),
then
if Max has a strategy $\bar\sigma$
such that
\(
\forall \pi:
\Pr{s}{\bar\sigma,\pi}{W}
\geq
\nu
\)
then $\nu\leq L$.
Similarly,
if Player Min has a strategy $\bar\pi$
such that
\(
\forall \sigma:
\Pr{s}{\sigma,\bar\pi}{W}
\leq
\nu
\)
then $\nu\geq R$.
By strong determinacy, $\forall \nu: \neg(R > \nu > L)$,
thus $R\leq L$.
$L \leq R$ follows from definitions, thus strong determinacy implies determinacy.
On the other hand, it is easy to see that the 
existence of $\eps$-optimal strategies
for both players implies strong determinacy
for cases where $|\nu-\val{s}|\geq2\eps$,
the players simply use their $\eps$-optimal strategies to win.
This works even for $\eps=0$, thus whenever both players
have optimal strategies, the game is strongly determined (for all $\nu$).
To sum up the relation between the key three notions:
Every game with a Borel winning condition is determined in the
sense of (\ref{eq:det}), some of these games
are strongly determined, and some of the strongly determined
games are those admitting optimal strategies for both players.
Example~\ref{ex:max-not-opt} and~\cite[Fig.~1]{BBKO11}
show that both the inclusions are proper.
More precisely, in the game from Example~\ref{ex:max-not-opt}, which we show later,
Player Min has only one (trivial) strategy, thus the game is strongly determined.
However, there is a state $r_0$, such that for every fixed
strategy of Max the probability of winning is strictly below $\val{r_0}$.
The game from~\cite[Fig.~1]{BBKO11}, is composed of two halves,
one of which is essentially equivalent to the game in Example~\ref{ex:max-not-opt},
and the other is a similar game adopted for Min (infinite branching needed).
As a consequence, neither Player Max in the first half, nor Min in the second half
have optimal strategies. Thus, fixing a strategy of one player first, which is $\eps$-optimal,
the other player may choose an $\eps/2$-optimal strategy to beat the first player.
As a consequence, no player has a winning strategy.

We are especially interested in the situation when $W$ is an
open set, and call such games \emph{open} as well.
This includes all
\emph{reachability} conditions, where $W$ is
the set of all runs visiting a state from a distinguished
set of target states, $T$.
For reachability, results of~\cite{BBKO11,BBKO09} imply (see Corollary~\ref{cor:min-opt})
that Player Min has always optimal strategies if every state, $s$, owned by Min
has at least one successor, $t$, such that $\val{s}=\val{t}$.
This is always the case in finitely-branching games,
where all states have only finite number of successors.
On the other hand, even in very simple reachability games
where every state has at most $2$ successors,
Player Max may not have an optimal strategy (cf.\ Example~\ref{ex:max-not-opt}).
Our main result gives a condition sufficient for the existence
of optimal strategies for Player Max.
\begin{theorem}
\label{thm:opt}
Let $\G$ be an open stochastic game.
Player Max has an optimal strategy in all
states,
if
\begin{equation}
\label{eq:acc-zero}
\tag{$*$}
\text{
the set
$V_\eps\coloneqq\{\val{s} \mid \text{$s$ is a state of $\G$}\land\val{s}\geq\eps\}$
is finite
for every $\eps>0$.}
\end{equation}
\end{theorem}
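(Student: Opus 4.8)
The plan is to construct an explicit optimal strategy for Max and verify it by an optional‑stopping argument on the value function. Throughout I treat the open objective as a reachability one: since $W$ is open, Max wins a run exactly when some finite prefix already lies inside a basic open subset of $W$; write $\tau$ for the first such time (for a reachability condition, the hitting time of the target set), so that Max wins iff $\tau<\infty$, and write $X_n$ for the $n$‑th state of the run. One first records the standard local optimality identities for $\val{\cdot}$: at a Max state $s$, $\val s=\sup_{s\tran{}t}\val t$; at a Min state, $\val s=\inf_{s\tran{}t}\val t$; at a stochastic state, $\val s=\sum_t\delta(s)(t)\,\val t$.

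The structural consequence of (\ref{eq:acc-zero}) that drives everything is that $\V:=\{\val s\mid s\text{ a state of }\G\}$ has $0$ as its only possible accumulation point, hence every positive value is isolated in $[0,1]$ and $\V$ has a greatest element, which equals $1$ whenever some run is winnable. In particular, at a Max state $s$ with $\val s>0$ the supremum $\sup_{s\tran{}t}\val t=\val s$ is \emph{attained}: the successor values all lie in $\{0\}\cup V_{\val s/2}$, a finite set, so the supremum is a maximum over finitely many reals — it could fail to be attained only if infinitely many distinct successor values approached $\val s$ from below, which (\ref{eq:acc-zero}) forbids. This is exactly what fails in the obstruction of Example~\ref{ex:max-not-opt}.

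Max's strategy $\sigma$ is then built to track the ``current level'' $v=\val{X_n}$ and, at a Max state of level $v>0$, to play a value‑preserving move (one exists by the previous paragraph) prescribed by an auxiliary strategy whose task, while the play stays in level $v$, is to steer it to a win or to a state of value $\neq v$. The \emph{key lemma} is that Max can accomplish this \emph{almost surely}, against any Min strategy. Granting it, the verification is short. Under $\sigma$ and any $\pi$, $(\val{X_n})_n$ is a bounded submartingale: constant at Max states by value preservation, non‑decreasing at Min states by the $\inf$‑identity, a martingale at stochastic states. Sample it at the successive times $0=\rho_0<\rho_1<\cdots$ at which the level changes or a win first occurs (freezing once a win occurs); by the lemma each $\rho_k$ is a.s.\ finite, so $(\val{X_{\rho_k}})_k$ is a bounded submartingale converging a.s.\ to some $Y$ with $\Exp[Y]\ge\val{X_0}=\val s$. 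On winning runs $Y=1$; on non‑winning runs every $\rho_k$ is finite and $Y$ cannot equal a positive value (a positive limit would force the sequence to be eventually constant since values are isolated, contradicting $\val{X_{\rho_{k+1}}}\neq\val{X_{\rho_k}}$ whenever $\rho_{k+1}<\infty$ on a non‑winning run), so $Y=0$. Hence $Y=\mathbf 1\{\text{Max wins}\}$ and $\Pr{s}{\sigma,\pi}{W}=\Exp[Y]\ge\val s$; combined with the trivial inequality $\sup_\sigma\inf_\pi\Pr{s}{\sigma,\pi}{W}\le\val s$, this makes $\sigma$ optimal at $s$.

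The main obstacle is the key lemma — that from any state of a positive value level $v$ Max can force, almost surely, a win or a departure from that level. For $v=1$ it is easy: if Min could confine the play to $\{\val{\cdot}=1\}\setminus W$ with positive probability against every Max strategy, then from a value‑$1$ state Max's winning probability would be $<1$, a contradiction. For $0<v<1$ the lemma is genuinely delicate, and this is where I expect a second, essential use of (\ref{eq:acc-zero}): because branching is unbounded, at a stochastic state of value $v$ the probability of moving to a different value may be arbitrarily small, so a purely value‑preserving strategy can be kept ``lingering'' in level $v$ forever with positive probability. The repair must make the auxiliary strategy robust to this — intuitively, by permitting controlled descents to the next lower value when progress stalls (well defined because (\ref{eq:acc-zero}) leaves only finitely many values above any $\eps>0$, so there is a greatest value below $v$ and descents below $\eps$ cannot continue indefinitely), interleaving these with near‑optimal play organised by recursion down the $\omega$‑indexed ladder of levels below $1$, so that the probability of lingering at any fixed positive level is ultimately $0$ and the exit‑value computation above still closes. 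Carrying this out is the technical heart of the argument.
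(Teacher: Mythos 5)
There is a genuine gap, and you have located it yourself: the ``key lemma'' --- that from any state of positive value $v$, Max, while playing only value-preserving moves, can force almost surely (against every Min strategy) either a win or a departure from the level $\{\val{\cdot}=v\}$ --- is never proved. Everything else in your argument (the attainment of the supremum at Max states under (\ref{eq:acc-zero}), the submartingale property of $\val{X_n}$, the optional-stopping computation showing $Y=\mathbf 1\{W\}$ and $\Exp[Y]\geq\val{s}$) is fine, but it all rests on the almost-sure finiteness of the level-change times $\rho_k$, which is exactly the unproven claim. Two further problems compound this. First, your argument for the case $v=1$ is a non sequitur: from ``for every $\sigma$ there is a $\pi_\sigma$ confining the play to level $1$ without winning with probability $p_\sigma>0$'' you may conclude only that each fixed $\sigma$ wins with probability $<1$, which does not contradict $\val{s}=1$ (the supremum over $\sigma$ can still be $1$ if $p_\sigma\to0$; this is precisely the situation in Example~\ref{ex:max-not-opt}, whose states $r_i$ have value $1$ yet admit no strategy winning with probability $1$). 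Second, the repair you sketch for $0<v<1$ --- deliberately descending to the next lower value when progress stalls --- destroys the very property your verification needs: at such a Max move $\val{X_n}$ strictly decreases, so $(\val{X_n})$ is no longer a submartingale and the inequality $\Exp[Y]\geq\val{s}$ no longer follows from optional stopping. You would have to quantify the total value lost to these descents and show it is zero, which is again the whole difficulty.

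For comparison, the paper's proof deliberately avoids any almost-sure ``exit from a level'' claim. It first handles the case $\sMin{S}=\emptyset$ (Proposition~\ref{prop:max-opt}) by a contradiction argument: a state with no optimal strategy yields a ``losing'' state whose expected loss $\ell_s$ is bounded away from $0$, and the value gap $\delta>0$ supplied by (\ref{eq:acc-zero}) then bounds the winning probability away from $\val{s}$. This MDP case is what justifies passing to the value-preserving game $\H$ without changing values (Lemmata~\ref{lem:min-max-win} and~\ref{lem:val-pres}) --- a step your construction also needs but does not supply, since you restrict Max to value-preserving moves from the outset. The optimal strategy in $\H$ (Lemma~\ref{lem:max-opt}) is then built by a restart device: follow a $\tfrac12\val{s}$-optimal strategy until the conditional winning probability given the history drops to $0$, then restart from the current state. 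The mass-conservation identity $\val{s}=\sum_{w\in\Lambda}\Pr{s}{\sigma,\pi}{\run{w}}\cdot\val{\last{w}}$, valid in $\H$ because all player edges preserve value, together with the fact that each restart recovers at least half of the remaining value, gives $\Pr{s}{\sigma,\pi}{W\land \lazRW\leq k}\geq\val{s}(1-2^{-k})$ directly --- no stopping time needs to be almost surely finite. If you want to salvage your route, you should either prove your key lemma (which I suspect requires essentially the paper's restart/mass-accounting machinery anyway) or replace ``sample at level changes'' by an accounting of winning mass per restart as the paper does.
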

In particular, $\G$ is not assumed
to be finitely-branching.
Condition (\ref{eq:acc-zero}) is just saying that the set
$V\coloneqq\{\val{s} \mid \text{$s$ is a state}\}$
has no accumulation points, or the only such point is $0$.
It is a trivial task to construct a game
where none of the players owns a single state, i.e., a Markov chain,
and where the set $V$ contains other accumulation points than $0$.
In Markov chains, however, each player has only one, trivial, strategy,
which must thus be the optimal one.
This shows that (\ref{eq:acc-zero})
is not necessary.
However, there are at least two reasons for which (\ref{eq:acc-zero})
is interesting:
First, we identify a class of recently studied infinite-state
stochastic games which satisfy the assumption of Theorem~\ref{thm:opt},
and for which the existence of optimal strategies for Max was not
known before. This class, properly described later, consists of games
generated by One-Counter automata~\cite{BBEK11,BBE10,BBEKW10}, which satisfy a
certain additional property, which can be tested algorithmically.
As a special case, this class involves a maximizing variant of
Solvency Games~\cite{BKSV08}.

Second, in Examples~\ref{ex:max-not-opt} and~\ref{ex:noopt}, we show
games where Player Max lacks optimal strategies. These games
are rather simple, and violate (\ref{eq:acc-zero}) only ``very slightly'',
in particular, they
(1) are finitely-branching, and in fact have both the out-degree
and in-degree of the game graph bounded by $2$,
(2) do not contain states of Player Min at all,
(3) all transition probabilities in stochastic
states are uniformly distributed,
and
(4) $V$ has only one accumulation point.
This point is $1$ in Example~\ref{ex:max-not-opt},
and $1/2$ in Example~\ref{ex:noopt}. In the latter case,
the accumulation point is approached only from above, and
$V\cap[0,1/2)=\{0\}$.
Thus it is not possible to weaken the assumption (\ref{eq:acc-zero}) in Theorem~\ref{thm:opt}
by allowing other accumulation points than $0$.

As noted before, \emph{both} players having optimal strategies implies
strong determinacy.
But even for finitely-branching reachability games strong determinacy still holds, although
Player Max may not have optimal strategies, and only
Player Min always does~\cite{BBKO11}.
Interestingly,
we show here that under (\ref{eq:acc-zero}), where Max has optimal strategies,
and Min may not have such, strong determinacy survives.

\begin{theorem}
\label{thm:sdet}
Let $\G$ be an open stochastic game satisfying (\ref{eq:acc-zero}).
Then $\G$ is strongly determined.
\end{theorem}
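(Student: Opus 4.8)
We build on Theorem~\ref{thm:opt} by adding a single residual case. Recall the observations following~(\ref{eq:det}): for a state $s$, a threshold $\nu$, and $\genGt\in\{>,\geq\}$, if $\nu<\val{s}$ then Player Max secures $\genGt\,\nu$ by an $\eps$-optimal strategy with $\eps<\val{s}-\nu$, and if $\nu>\val{s}$ then Player Min secures $\genNGt\,\nu$ by an $\eps$-optimal strategy with $\eps<\nu-\val{s}$; such strategies exist for both players by Martin's theorem~\cite{M98}. If $\nu=\val{s}$ and $\genGt$ is $\geq$, the optimal Max strategy supplied by Theorem~\ref{thm:opt} already achieves $\Pr{s}{\bar\sigma,\pi}{W}\geq\val{s}=\nu$ against every $\pi$. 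Hence it suffices to show: for every state $s$, \emph{either} Player Max has a strategy $\bar\sigma$ with $\Pr{s}{\bar\sigma,\pi}{W}>\val{s}$ for all $\pi$, \emph{or} Player Min has a strategy $\bar\pi$ with $\Pr{s}{\sigma,\bar\pi}{W}\leq\val{s}$ for all $\sigma$.

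I would first settle the states with $\val{s}=0$, for which~(\ref{eq:acc-zero}) is not needed. Let $Z$ be the largest set of states disjoint from $T$ such that every Min state in $Z$ has a successor in $Z$, and every Max state and every stochastic state in $Z$ has all successors in $Z$ (the greatest fixpoint of the evident monotone operator). If $s\in Z$, the Min strategy that always stays inside $Z$ generates only runs avoiding $T$, so $\Pr{s}{\sigma,\bar\pi}{W}=0=\val{s}$ for every $\sigma$. If $s\notin Z$, a standard (possibly transfinite) attractor argument --- legitimate because $W$ is open, and imposing no bound on branching --- produces a Max strategy with $\Pr{s}{\bar\sigma,\pi}{W}>0=\val{s}$ for every $\pi$. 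Since each state either is or is not in $Z$, this settles $\val{s}=0$.

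Next, the states with $\val{s}>0$, where~(\ref{eq:acc-zero}) enters decisively. The structural ingredient, in the spirit of Corollary~\ref{cor:min-opt}, is that at every Min state $r$ with $\val{r}>0$ the infimum defining $\val{r}$ is attained: all successors of $r$ have value $\geq\val{r}$, by~(\ref{eq:acc-zero}) only finitely many values are $\geq\val{r}$, so some successor of $r$ has value exactly $\val{r}$. Let $\bar\pi$ play such a \emph{value-preserving} move at every positive-value Min state and, as soon as the run first enters a state $t$ with $\val{t}=0$, revert to the strategy from the previous paragraph tailored to $t$. On the initial segment of a run that remains in positive-value states, the process $\val{X_n}$ (with $X_n$ the $n$-th state) is a supermartingale under $(\sigma,\bar\pi)$ for any $\sigma$: constant at Min states, non-increasing at Max states (a Max state's value dominates that of each successor), and a martingale at stochastic states. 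Since target states have value $1$, optional stopping together with Fatou's lemma bounds by $\val{s}$ the probability of runs that reach $T$ before leaving the positive-value region.

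The remaining mass comes from runs entering a value-$0$ state $t$ before reaching $T$, and I expect this to be the main obstacle. Such a $t$ may be forced on Player Min by a Max move or a stochastic branch out of a positive-value state even though $t\notin Z$, and then Min cannot keep the winning probability at $0$, only at some $\eps>0$, which a priori yields merely $\Pr{s}{\sigma,\bar\pi}{W}\leq\val{s}+\eps$. I plan to resolve this by a trade-off: the step into such a $t$ entails a strict decrease of the value process, so a refined potential that charges the $\eps$-leak at each such entry against the value already conceded there restores the exact bound $\val{s}$ for Min; equivalently, argue contrapositively --- if \emph{no} Min strategy achieves $\leq\val{s}$ at $s$, then some value-$0$ state outside $Z$ is reachable against Min's best defence, and then Player Max, replaying an optimal strategy up to that entry and afterwards switching to a strategy forcing probability $>0$ out of $t$ (available from the $\val{\cdot}=0$ analysis), obtains $\Pr{s}{\bar\sigma,\pi}{W}>\val{s}$ for all $\pi$, i.e.\ the first alternative. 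Making this dichotomy precise, uniformly over the countably many candidate entry points and over all Max strategies, is the substantive step; everything else rests on Theorem~\ref{thm:opt} and Martin's theorem.
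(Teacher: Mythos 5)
Your reduction to the single residual case ($\nu=\val{s}>0$ with strict inequality) is exactly right, and your treatment of the value-$0$ states via the greatest fixpoint $Z$ coincides with the paper's Lemma~\ref{lem:sdet-zero}: your $Z$ is precisely the set of safe states, and ``$s\notin Z$'' is literally the negation of safety, so Max gets the witnessing strategy essentially by definition. The supermartingale bound on runs that reach $T$ while remaining in the positive-value region is also sound. The gap is everything after that, which is the entire content of the paper's Lemmas~\ref{lem:Gprime} and~\ref{lem:hard-det}; and of the two resolutions you offer for it, the first does not work. When the run enters a value-$0$ state $t\notin Z$, the value process drops to $0$, so the optional-stopping inequality already credits those runs with a contribution of exactly $0$: there is no positive ``value already conceded'' against which to charge the $\eps$-leak. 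What your construction yields is $\Pr{s}{\sigma,\bar\pi_\eps}{W}\leq\val{s}+\eps$ for each $\eps>0$ with a \emph{different} Min strategy for each $\eps$; strong determinacy demands a single $\bar\pi$ achieving $\leq\val{s}$ against all $\sigma$, and an infimum of $\val{s}$ over Min strategies is just ordinary determinacy, which was already known.

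Your contrapositive dichotomy is the correct route and is essentially what the paper does, but it is precisely the ``substantive step'' you defer. Let $R$ be the set of value-$0$ states outside $Z$ and let $Att(R)$ be the set of states from which Max can force reaching $R$ with positive probability against every $\pi$. Two things then need proof. (i) If $s\notin Att(R)$, Min must be able to \emph{simultaneously} play value-preservingly at positive-value states and keep the play out of $Att(R)$; this is not automatic, since the value-preserving successor guaranteed by (\ref{eq:acc-zero}) at a Min state might lie inside $Att(R)$. The paper handles this by first passing to the subgame with only value-preserving player edges (checking via Corollary~\ref{cor:min-opt} and Lemma~\ref{lem:Gprime} that the values and the relevant quantified statements survive), then deleting all edges into $Att(R)$ and verifying that the result is still a game graph with unchanged values in which Min has an optimal strategy that lifts back to $\G$. (ii) If $s\in Att(R)$, Max must beat $\val{s}$ \emph{uniformly} in $\pi$: the entry point into $R$ depends on $\pi$, so ``replay an optimal strategy, then switch'' requires an argument that the switch gains a bonus bounded away from $0$ independently of $\pi$ without sacrificing the guarantee of $\val{s}$ on the rest of the run space. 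Until (i) or the contrapositive via (ii) is actually carried out, the theorem is not proved.
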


\paragraph{Related work and open questions.}
Blackwell games are more general than our stochastic games, players there
choose their moves simultaneously, not knowing the concurrent choice of the opponent.
A famous determinacy result in the sense of (\ref{eq:det}) for Blackwell
games is given in~\cite{M98}.
Finitely-branching reachability games have been studied
as a theoretical background for some algorithmic results
concerning BPA games (i.e., games with graphs generated by stateless pushdown
automata) in~\cite{BBKO11, BBKO09}.
Finite-state reachability stochastic games
were studied in~\cite{C92}. In view of existence of optimal
strategies and strong determinacy, finite-state
games are not interesting: optimal strategies always exist there.
However, the precise complexity of associated computational problems for these games
is a long-standing and interesting open problem.

Theorem~\ref{thm:sdet} and the results
from~\cite{BBKO11,BBKO09} give us two classes of strongly determined games:
games satisfying (\ref{eq:acc-zero}), and finitely-branching games, respectively.
Neither of these two classes is contained in the other. The most interesting
question in our opinion is whether the following conjecture is true; and if it is not, for
which, as weak as possible, restrictions on $W$ and/or $\G$ it becomes true.
\begin{conjecture}
\label{conj:sdet}
Let $\G$ be a stochastic game, and $W$ a winning condition,
such that Player Max (or Player Min) has an optimal strategy
in every state of $\G$.
Then $\G$ is strongly determined.
\end{conjecture}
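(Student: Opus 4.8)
The plan is to reduce strong determinacy to a single residual case and resolve it using Theorem~\ref{thm:opt} and condition (\ref{eq:acc-zero}). By Martin's result (\ref{eq:det}), for every $\eps>0$ both players have $\eps$-optimal strategies, and by Theorem~\ref{thm:opt} Player Max has a genuinely optimal strategy $\sigma^*$ from every state; since $\sigma^*$ remains optimal in every subgame it reaches, it moves to a value-preserving successor at each Max state along its own plays. As already observed in the introduction, these facts alone settle strong determinacy at every $(s,\nu)$ for $\genGt=\mathord{\geq}$, and at every $(s,\nu)$ with $\nu\neq\val{s}$ for $\genGt=\mathord{>}$. So it suffices to treat, for each state $s$, the case $\genGt=\mathord{>}$, $\nu=\val{s}$. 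Unwinding the definitions, and using that $\forall\sigma\colon\Pr{s}{\sigma,\bar\pi}{W}\leq\val{s}$ is equivalent to $\bar\pi$ being optimal, what must be shown is the dichotomy: either Max has $\bar\sigma$ with $\Pr{s}{\bar\sigma,\pi}{W}>\val{s}$ for all $\pi$, or Player Min has an optimal strategy from $s$.

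The extreme values are easy: if $\val{s}=1$ the right disjunct is vacuous. If $\val{s}=0$, let $A$ be the positive attractor of $T$ (the least set containing $T$ and closed under adjoining a Max- or stochastic state with a successor in $A$, and a Min state all of whose successors lie in $A$). From $A$, Max forces a visit to $T$ with positive probability against every $\pi$, so $s\in A$ gives the left disjunct; if $s\notin A$, Min can remain in $S\setminus A$, hence avoid $T$ surely, and that trivial Min strategy is optimal.

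For $0<\val{s}<1$ I would first record the combinatorial content of (\ref{eq:acc-zero}): finiteness of every $V_\eps$ makes $\val{s}$ an isolated point of $V$, so there is a value gap $(\alpha,\val{s})\cup(\val{s},\beta)$ with $\alpha<\val{s}<\beta$, and at every Min (resp.\ Max) state of positive value the infimum (resp.\ supremum) of the successors' values is attained, i.e.\ there is a value-preserving move. Call a Min state \emph{bad} if it has value $0$ but no value-$0$ successor, and let $G$ be the positive attractor of $T\cup\{\text{bad states}\}$. If $s\notin G$, then from $s$ Min can stay in $S\setminus G$; there it meets no bad state, so at positive-value Min states it preserves the value and at value-$0$ Min states it passes to a value-$0$ successor; treating $\val{\cdot}$ as a supermartingale along the resulting plays shows this strategy gives $\Pr{s}{\sigma,\bar\pi}{W}\leq\val{s}$ for every $\sigma$, hence is optimal (the right disjunct). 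If $s\in G$, Max can force the play, against every $\pi$, to reach $T$ or a bad Min state, and from a bad state every Min move enters a positive-value state from which $\sigma^*$ recovers at least that value; the aim is then to combine this forcing with $\sigma^*$ (and the submartingale $(\val{s_i})_i$ that $\sigma^*$ induces) so as to get $\Pr{s}{\bar\sigma,\pi}{W}>\val{s}$ for every $\pi$, with the gap up to $\beta$ supplying the quantitative surplus.

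The main obstacle is exactly this last estimate in the case $s\in G$. The submartingale $(\val{s_i})_i$ under $\sigma^*$ need not converge to a $\{0,1\}$-valued limit, so a positive drift at the forced moves out of bad states does not by itself push the hitting probability above $\val{s}$; one must also show that under the optimal $\sigma^*$ any submartingale mass that converges to an intermediate value is, in fact, eventually converted into visits of $T$. Here (\ref{eq:acc-zero}) enters a second time: isolation of values forces $\val{s_i}$ to be eventually constant on runs that avoid $T$, and on such a constant-value tail optimality of $\sigma^*$ from the visited states gives a contradiction, which makes the drift estimate effective. It may be cleaner in the end to reuse directly the strategy and potential-function bounds developed in the proof of Theorem~\ref{thm:opt}.
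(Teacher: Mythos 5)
You are attempting to prove Conjecture~\ref{conj:sdet}, which the paper explicitly leaves \emph{open}: it concerns an arbitrary stochastic game with an arbitrary Borel winning condition $W$, under the sole hypothesis that one of the two players --- possibly Min --- has an optimal strategy in every state; the authors even remark that they do not know whether it holds when $W$ is a reachability condition. There is therefore no proof in the paper to compare against, and your argument does not actually address the conjecture. From the first sentence you invoke condition~(\ref{eq:acc-zero}) and Theorem~\ref{thm:opt}, neither of which is available here: (\ref{eq:acc-zero}) is not a hypothesis of the conjecture, and Theorem~\ref{thm:opt} applies only to open $W$ under (\ref{eq:acc-zero}). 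You also work throughout with a target set $T$, attractors, value gaps, and value-preserving successors, all of which presuppose exactly those extra assumptions. Under them, the statement you are really proving is Theorem~\ref{thm:sdet}, which the paper already establishes via Lemmata~\ref{lem:easy-det} and~\ref{lem:hard-det}. Nothing in your proposal touches the genuinely conjectural territory: games violating (\ref{eq:acc-zero}), non-open winning conditions, or the symmetric case in which it is Min rather than Max who is assumed to have optimal strategies everywhere.

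Even read as a proof of Theorem~\ref{thm:sdet}, the proposal is incomplete at precisely the point where the difficulty lies. The reduction to the residual case $\genGt=\mathord{>}$, $\nu=\val{s}$, $0<\val{s}<1$, and the subcase $s\notin G$ (Min confines the play to the complement of the attractor and $\val{\cdot}$ becomes a supermartingale) are in the spirit of the paper's Lemma~\ref{lem:hard-det}, which prunes the attractor of the set $R$ of value-$0$ states from which Max can force a positive winning probability and restores an optimal Min strategy in the pruned game. But for $s\in G$ you state only an ``aim'' and then concede that the drift estimate does not go through as described; suggesting that it ``may be cleaner to reuse'' the bounds from Theorem~\ref{thm:opt} is not an argument. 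Note also that your ``bad'' Min states (value $0$ with no value-$0$ successor) are not literally the paper's $R$, which also contains stochastic and Max states of value $0$ from which a positive winning probability can be forced, so the two decompositions would have to be reconciled before the $s\notin G$ subcase can be trusted. To make progress on the conjecture itself you would need an argument that does not pass through Theorem~\ref{thm:opt} or any structural consequence of (\ref{eq:acc-zero}).
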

We do not even know whether the conjecture is true for all games
where $W$
is a reachability condition.
Other open questions include finding new interesting classes
of games where one of the players is guaranteed to
have optimal strategies.

\paragraph{Outline of the paper.}
We briefly formalise the necessary notions,
and recall some important known facts
in Section~\ref{sec:defs}.
In Section~\ref{sec:mdp} we prove Theorem~\ref{thm:opt} in the special
case of games without Player Min.
Both theorems are then proved in full generality
in Section~\ref{sec:games}.
Finally, in Section~\ref{sec:oc} we briefly explain
what are One Counter games, and apply our results to them.

\section{Preliminaries}
\label{sec:defs}

As noted in the Introduction, we use the simple term ``games''
for our special kind of games (Definition~\ref{def:game}).
Because we do not speak about other games here,
we hope the reader will excuse us for this inaccuracy.

\begin{definition}
A \emph{game graph}, $G=\game$, has a countable set
$S$ of states, partitioned into sets
$\sP{S}$, $\sMax{S}$, $\sMin{S}$ of stochastic states,
states of Player Max, and Player Min, respectively;
a countable \emph{transition relation} $\tran{}\subseteq S\times S$
such that $\forall r\in S:\exists s\in S:r\tran{}s$;
and a probability weight function $\delta:\sP{S}\times S \to [0,1]$ such that
for all $r\in\sP{S}$ we have
\(
\sum_{r\tran{}s}
\delta(r,s)
=
1
.
\)
\end{definition}

A \emph{run} is an infinite path in a game graph.
For a finite path $w$, we denote the states it visits by
$w(0),w(1),\ldots, w(k)$, and call $k=\len{w}$ the \emph{length} of $w$.
$\run{w}$ is the set of all runs extending $w$.
Unions of sets of the form $\run{w}$ are called open
sets, they are open in the product topology
over the discrete spaces $S$.
Closing the set of open sets under complements
and countable union defines the set of (Borel-)measurable
sets.

\begin{definition}
\label{def:game}
A \emph{game}, $\G$, is given by a game graph, $G$, and a Borel-measurable set
of runs, $W$, called the \emph{winning condition}.
If there is some $T\subseteq S$ so that
$W = \bigcup\{\run{w} \mid \text{$w$ ends in $T$}\}$
then $W$ is a \emph{reachability} condition, and
$\G$ is called a \emph{reachability game}.
\end{definition}

A strategy for Player Max is a function assigning to every
finite path (called a \emph{history}) ending in a state $s\in\sMax{S}$
a distribution over the successors of $s$. Similarly,
a strategy for Min is defined for histories ending in $\sMin{S}$.
A strategy is memoryless, if it only depends on the last state
of the history.

Fixing a pair of strategies, $(\sigma,\pi)$, for Max and Min,
respectively, we assign to every finite path, $w$, the product, $\rho^{\sigma,\pi}(w)$,
of weights on the edges along $w$ given by $\delta$,
$\sigma$, and $\pi$.
Fixing also an initial state, $s$, we define
a probability measure $\Pr{s}{\sigma,\pi}{\cdot}$ by
$\Pr{s}{\sigma,\pi}{\run{w}}\coloneqq 0$ for $w$ not starting in $s$,
$\Pr{s}{\sigma,\pi}{\run{w}}\coloneqq \rho^{\sigma,\pi}(w)$ for $w$ starting in $s$,
and extending this to complement and union
to satisfy the axioms of a probability measure.
The uniqueness of this construction is a standard fact, see, e.g.,
\cite[p.~30]{Puterman94}.

The definition of the value, $\val{\cdot}$, given in (\ref{eq:det}),
has thus been formalised.
For $\eps\geq0$, a strategy, $\sigma$, for Max is $\eps$-optimal in a state $s$
if
\(
\Pr{s}{\sigma,\pi}{W}
\geq
\val{s}-\eps
\)
for all strategies, $\pi$, for Min.
The $\eps$-optimal strategies for Min are defined analogously.
We call $0$-optimal strategies just optimal.

\subsection{Technical Assumptions}

Although a game graph, in general, may have an arbitrary structure,
we can always transform it to be a forest, without changing the properties
of the game, by keeping track of the history inside the states.
More precisely, 
given a game $\G=(G,W),\ G=\game$,
consider
a game $\G'=(G',W'),\ G'=\gameP$,
where the states in $S'$ are just finite sequences of states from $S$.
In particular, $S\subseteq S'$, and whenever $r\tran{} s$ in $\G$
then $w r \btran{} w r s$ in $\G'$.
Projecting the states of $S'$ to their last component induces a map, $\phi$,
from paths in $G'$ to paths in $G$.
We set $W'\coloneqq \phi^{-1}(W)$.
The map $\phi$ also induces a map, $\Phi$, from strategies in $\G$ to strategies
in $\G'$, by sending histories through $\phi$.
Naturally, the partition of $S'$, and the weight function $\delta'$ are both derived from $S$ and
$\delta$ by projecting states from $S'$ to the last component.

It is easy to verify that for every $s\in S$, if we restrict the game graphs
of $\G$ and $\G'$ to states reachable from $s$, then
$\phi$ is clearly bijective and preserves measurability in both directions.
Also $\Phi$ is bijective, and for all measurable $A\subseteq \run{s}$,
and all pairs $(\sigma,\pi)$ of strategies:
\(
\Pr{s}{\sigma,\pi}{A}
=
\Pr{s}{\Phi(\sigma),\Phi(\pi)}{\phi(A)}
.
\)
As a consequence,
$\val{s}$ is the same in $\G$ and $\G'$ for all $s\in S$,
and the sets of all values in $\G$ and in $\G'$ are equal.
Also, $W$ is open iff $W'$ is a reachability condition.
Every strategy in $\G'$ is memoryless, because $G'$ is a forest.
Finally, once we have a reachability objective, with the target
set $T$, we may clearly assume without loss of generality,
that all states in $T$ are absorbing. This shows that
to prove Theorems~\ref{thm:opt} and~\ref{thm:sdet} we may safely assume
the following:

\begin{assumption}
\label{as:tree}
The game graph is always a forest, all strategies are memoryless,
and the winning condition is a reachability condition specified
by some target set $T\subseteq S$, such that
for all $t\in T$ the only edge leaving $t$ is $t\tran{}t$.
\end{assumption}

\subsection{Known Results for Reachability Games}
We state here some known results to be used later.
The following gives a characterisation of values, and allows us
to characterise the existence of optimal strategies for Min.
\begin{fact}[\protect{cf.\ \cite[Theorem~3.1]{BBKO11}}]
\label{fa:fixp}
Let $\G=(G,W)$, $G=(S,\tran{},\delta)$ be a game,
with $W=\bigcup\{\run{w} \mid \text{$w$ ends in $T$}\}$.
The least fixed point of the following (Bellman) functional 
$\V:(S\to [0,1])\to (S\to [0,1])$ exists
and is equal to $\val{\cdot}$.
\[
\V(f)(s) =
\begin{cases}
1 & \text{if $s\in T$}  \\
\sup\{f(r) \mid s \tran{} r\} & \text{if $s\in\sMax{S}\setminus T$} \\
\inf\{f(r) \mid s \tran{} r\} & \text{if $s\in\sMin{S}\setminus T$} \\
\sum_{s \tran{}r} \delta(s,r) \cdot f(r) & \text{if $s\in\sP{S}\setminus T$}
\end{cases}
\]
\end{fact}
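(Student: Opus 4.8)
The plan is to prove both claims at once. First I note that $S\to[0,1]$, ordered pointwise, is a complete lattice and that $\V$ is monotone — each of its four clauses preserves the order — so by the Knaster--Tarski theorem $\V$ has a least fixed point, and this least fixed point coincides with the infimum of the set of pre-fixed points $\{f\mid\V(f)\le f\}$. To see that this least fixed point is $\val{\cdot}$ it then suffices to establish (i) $\val{\cdot}$ is a fixed point of $\V$, and (ii) $\val{s}\le f(s)$ for every pre-fixed point $f$ and every $s$: indeed, (i) makes $\val{\cdot}$ a fixed point, while (ii) makes it a lower bound for all fixed points, so it is the least one.

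For (i), the Bellman equations, I would unfold the value by one transition. Working in a forest game graph (Assumption~\ref{as:tree}; equivalently, with residual strategies after the first step), fix $s\notin T$ and split a run from $s$ at its first transition: the measure factorises as $\Pr{s}{\sigma,\pi}{W}=\sum_{s\tran{}r}\lambda(r)\,\Pr{r}{\sigma_r,\pi_r}{W}$, where $\sigma_r,\pi_r$ are the strategies induced in the subtree below $r$, where $W$ reappears (since $s\notin T$) as the reachability condition $T$ from $r$, and where $\lambda$ is $\delta(s,\cdot)$, Max's first move, or Min's first move according to the owner of $s$. Since the strategies in distinct subtrees are chosen independently, $\sup_\sigma$ and $\inf_\pi$ distribute over this decomposition: for a stochastic $s$ one reads off $\val{s}=\sum_r\delta(s,r)\val{r}$, for a Max $s$ one gets $\val{s}=\sup_r\val{r}$ (the convex combination is maximised at a single successor), and for a Min $s$ one gets $\val{s}=\inf_r\val{r}$; the case $s\in T$ is immediate since then every run from $s$ lies in $W$. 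These manipulations use only the definition of $\val{\cdot}$ as a $\sup\inf$ and the (definitional) existence of $\eps$-optimal strategies in the residual subgames, so no appeal to Martin's theorem is needed here.

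For (ii), let $\V(f)\le f$ and fix $\eps>0$. I would let Min follow the strategy $\pi_\eps$ that at its $j$-th move goes to a successor $r$ of the current state $s$ with $f(r)\le\inf\{f(r')\mid s\tran{}r'\}+\eps2^{-j}\le\V(f)(s)+\eps2^{-j}\le f(s)+\eps2^{-j}$. Write $X_n$ for the state after $n$ steps and $\tau$ for the first time $X_n\in T$, so $f(X_\tau)=1$. From $\V(f)\le f$ one checks, against an arbitrary Max strategy $\sigma$: at a Max step $f(X_{n+1})\le f(X_n)$ pathwise; at a stochastic step $\Ex{}{}{f(X_{n+1})\mid X_0\cdots X_n}=\sum_r\delta(X_n,r)f(r)\le f(X_n)$; at a Min step $\Ex{}{}{f(X_{n+1})\mid X_0\cdots X_n}\le f(X_n)+\eps2^{-j}$. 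Hence the stopped process $f(X_{n\wedge\tau})$, corrected by the accumulated Min slack (total at most $\sum_{j\ge1}\eps2^{-j}=\eps$), is a supermartingale, and optional stopping gives $\Pr{s}{\sigma,\pi_\eps}{\tau\le n}\le\Ex{s}{\sigma,\pi_\eps}{f(X_{n\wedge\tau})}\le f(s)+\eps$. Letting $n\to\infty$ (the events $\{\tau\le n\}$ increase to $W$) and then $\eps\to0$ yields $\val{s}\le\inf_\pi\sup_\sigma\Pr{s}{\sigma,\pi}{W}\le f(s)$.

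The main obstacle is step (ii): the supermartingale bookkeeping must be done carefully — the per-move Min slacks have to be summable \emph{along every run}, which is why they are indexed by the number of Min moves and not by time; optional stopping must be applied to the bounded stopped process; and one must observe that $T$ being non-absorbing is harmless once everything is phrased through $\tau$. Step (i) is routine but needs care with the measure-theoretic factorisation over the first transition and with the fact that, the branching being possibly infinite, one uses $\eps$-optimal rather than genuinely optimal residual strategies — a single uniform $\eps$ suffices in each case, since a stochastic state contributes a convex combination and a Max or Min state involves only one successor subtree at a time.
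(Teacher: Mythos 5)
The paper offers no proof of this statement at all: it is imported as a known Fact with a pointer to \cite[Theorem~3.1]{BBKO11}, so there is nothing internal to compare against. Your argument is a correct, self-contained proof, and its skeleton (Knaster--Tarski for existence; one-step unfolding of $\sup_\sigma\inf_\pi$ to show $\val{\cdot}$ is a fixed point; a value-tracking Min strategy with geometrically decaying slack plus a supermartingale bound to show every pre-fixed point dominates the value) is the standard route to such Bellman characterisations. The two places that genuinely need the care you flag do go through: at a Min state the exchange $\sup_{(\sigma_r)}\inf_r = \inf_r\sup_{\sigma_r}$ is legitimate precisely because Max's residual strategies in distinct successor subtrees are chosen independently (pick each $\sigma_r$ to be $\eps$-optimal in its own subtree, with one uniform $\eps$); and in step (ii) the slack must be indexed by the count of Min moves along the run, not by global time, so that it sums to at most $\eps$ on every path --- which is exactly what you do. Two small points worth making explicit if you write this up: a pre-fixed point automatically satisfies $f(t)=1$ for $t\in T$ (since $f\leq 1$ and $\V(f)(t)=1\leq f(t)$), which is what makes $\Pr{s}{\sigma,\pi_\eps}{\tau\leq n}\leq\Ex{s}{\sigma,\pi_\eps}{f(X_{n\wedge\tau})}$ work; and your final chain only uses $\sup_\sigma\inf_\pi\leq\inf_\pi\sup_\sigma$, so, as you note, no appeal to Martin's determinacy is needed anywhere in the argument.
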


\begin{corollary}[\protect{cf.\ \cite[Theorem~3.1]{BBKO11}}]
\label{cor:min-opt}
Let $\G$ be a game as in Fact~\ref{fa:fixp}.
Let $G'=(S,\btran{},\delta)$ be a subgraph of $G$ where
$\btran{}$ is a subset of $\tran{}$, and if there is a pair $r,s\in S$
such that $r\tran{} s$ and $r \not\!\!\!\!\btran{}s$ then
$r\in\sMin{S}$ and there is some $s'\in S$ such that
$r\btran{}s'$ and $\val{s'}\leq\val{s}$ in $\G$.
Let $\G'=(G',W)$. Then the values are the same in $\G$ and $\G'$.

As a consequence,
a strategy, $\pi$, for Min is optimal iff for all
$r\in \sMin{S}$ it chooses with positive probability only successors $s\in S$ satisfying
$\val{r}=\val{s}$.
\end{corollary}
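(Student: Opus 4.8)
The plan is to read everything off the Bellman characterisation of Fact~\ref{fa:fixp}. Denote by $\V$ the Bellman functional of $\G$ and by $\V'$ that of $\G'$. Since $\btran{}\subseteq\tran{}$ and the two graphs agree on $\sMax{S}$, $\sP{S}$ and $T$, the two functionals differ only at Min states, where $\V'(f)(r)=\inf\{f(s)\mid r\btran{}s\}\geq\inf\{f(s)\mid r\tran{}s\}=\V(f)(r)$; also $G'$ is a legal game graph because at every Min state the witnessing edge $r\btran{}s'$ survives. From $\V'\geq\V$ and monotonicity of both functionals one gets that the least fixed point of $\V'$ lies pointwise above that of $\V$, i.e.\ $\val{\cdot}$ computed in $\G'$ is at least $\val{\cdot}$ computed in $\G$.

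For the reverse inequality I would verify that $v\coloneqq\val{\cdot}$ of $\G$ is itself a fixed point of $\V'$, which by Fact~\ref{fa:fixp} forces the least fixed point of $\V'$ to be $\leq v$. Away from Min states this is clear since $\V$ and $\V'$ coincide there and $\V(v)=v$. At a Min state $r$ the crux is the identity $\inf\{v(s)\mid r\btran{}s\}=\inf\{v(s)\mid r\tran{}s\}$: the left-hand side is $\geq$ the right-hand side since $\btran{}$ retains fewer successors, and it is $\leq$ because each discarded successor $s$ is accompanied by a retained $s'$ with $v(s')\leq v(s)$, so discarding $s$ cannot raise the infimum. Hence $\V'(v)(r)=\V(v)(r)=v(r)$, which completes the first part. (This is the one place where the hypothesis $\val{s'}\leq\val{s}$ is genuinely used, rather than merely $\btran{}\subseteq\tran{}$.)

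For the characterisation of optimal strategies for Min I would fix a strategy $\pi$ — memoryless, since the graph is a forest by Assumption~\ref{as:tree} — and consider the Markov decision process $\G^{\pi}$ obtained from $\G$ by turning each Min state $r$ into a stochastic state with outgoing distribution $\pi(r)$. It is again a reachability game, so by Fact~\ref{fa:fixp} the least fixed point of its Bellman functional $\V^{\pi}$ (which agrees with $\V$ except that $\V^{\pi}(f)(r)=\sum_{r\tran{}s}\pi(r)(s)\,f(s)$ at former Min states) equals the value of $\G^{\pi}$, which in turn equals $\sup_{\sigma}\Pr{s}{\sigma,\pi}{W}$. Recall also that $\val{r}\leq\val{s}$ for every successor $s$ of a Min state $r$, again by Fact~\ref{fa:fixp}. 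If $\pi$ selects with positive probability only successors with $\val{s}=\val{r}$, then $\V^{\pi}(\val{\cdot})(r)=\sum_{s}\pi(r)(s)\,\val{r}=\val{r}$, so $\val{\cdot}$ is a fixed point of $\V^{\pi}$, hence lies above its least fixed point; thus $\sup_{\sigma}\Pr{s}{\sigma,\pi}{W}\leq\val{s}$ everywhere and $\pi$ is optimal. Conversely, if $\pi$ is optimal in every state then $\sup_{\sigma}\Pr{s}{\sigma,\pi}{W}\leq\val{s}$, and since $\sup_{\sigma}\Pr{s}{\sigma,\pi}{W}\geq\val{s}$ by definition of the value, the value of $\G^{\pi}$ is exactly $\val{\cdot}$; being the least fixed point of $\V^{\pi}$, $\val{\cdot}$ satisfies $\val{r}=\sum_{s}\pi(r)(s)\,\val{s}$ at each Min state $r$, and with all $\val{s}\geq\val{r}$ this forces $\val{s}=\val{r}$ whenever $\pi(r)(s)>0$.

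I expect the whole proof to be light once Fact~\ref{fa:fixp} is available; the two spots demanding attention are the manipulation of infima over successor sets in the first part (one must invoke the value inequality, not just inclusion of the edge relations), and the observation that fixing $\pi$ really does yield a bona fide reachability MDP to which Fact~\ref{fa:fixp} applies, so that ``value of $\G^{\pi}$'' can legitimately be identified both with $\sup_{\sigma}\Pr{s}{\sigma,\pi}{W}$ and with the least fixed point of $\V^{\pi}$. The latter is where I would be most careful, particularly in keeping it consistent with the memoryless reduction of Assumption~\ref{as:tree}.
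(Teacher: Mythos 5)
Your proof is correct and follows essentially the same route as the paper: both parts rest on Fact~\ref{fa:fixp}, showing that $\val{\cdot}$ of $\G$ is a fixed point of the modified Bellman operator (hence dominates its least fixed point), with the reverse inequality coming from the observation that restricting Min's options cannot lower the value. Your treatment of the ``consequence'' via the MDP $\G^{\pi}$ is a faithful, more explicit unpacking of the paper's one-line instruction to remove all edges not used by $\pi$, and it correctly covers both directions of the equivalence.
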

\begin{proof}
Let $\V'$ be the Bellman functional associated with $\G'$.
Observe that the values in $\G$ form a fixed point of $\V'$, thus
for all $s\in S$, $\val{s}$ in $\G'$ is equal to or less than $\val{s}$ in $\G'$.
Moreover, it cannot be less, because Player Max has the same set of strategies
in $\G'$ as in $\G$, whereas Player Min does not get more strategies in $\G'$.
To derive the consequence, remove all edges not used by $\pi$.
\end{proof}
Note that the situation is not symmetric for Player Max.
Consider games without Player Min, and with out-degree
and in-degree bounded by $2$. In particular, this
implies that every state, $r$, of Player Max has at least
one successor, $s$, with $\val{r}=\val{s}$.
Even in these games, Player Max may lack optimal strategies,
as illustrated in the following classical
(see, e.g., \cite[p.~871]{BBEKW10},\cite[Example~6]{BBFK08})
example.

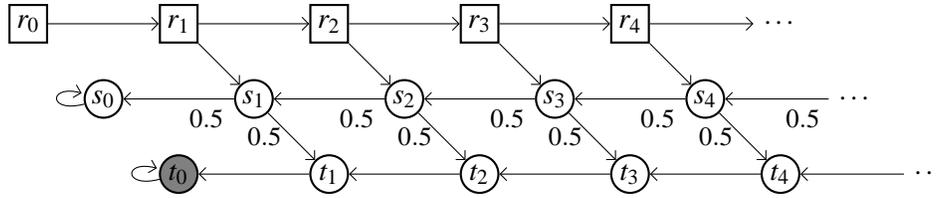
\begin{figure}
\begin{center}
\begin{tikzpicture}[x=2cm,y=2cm,>=angle 90]

\foreach \n/\m/\k in {0/0.5/1,1/1.5/2,2/2.5/3,3/3.5/4,4/4.5/5}
{
\node	(r\n)	at (\n, 4.5)	[max]	{$r_\n$};
\node	(s\n)	at (\m, 4)	[ran]	{$s_\n$};
}
\node	(t0)	at (1, 3.5)	[ran,fill=gray]	{$t_0$};
\node	(r5)	at (5, 4.5)		{$\cdots$};
\node	(s5)	at (5.5, 4)		{$\cdots$};
\node	(t5)	at (6, 3.5)		{$\cdots$};
\foreach \n/\m/\k in {1/1.5/2,2/2.5/3,3/3.5/4,4/4.5/5}
{
\node	(t\n)	at (\k, 3.5)	[ran]	{$t_\n$};
\draw[->] (r\n) to (s\n);
\draw[->] (s\n) to node [left] {{\small $0.5$}} (t\n);
}
\foreach \n/\nn in {0/1,1/2,2/3,3/4,4/5}
{
\draw[->] (r\n) to (r\nn);
\draw[->] (s\nn) to node [near start,below] {{\small $0.5$}} (s\n);
\draw[->] (t\nn) to (t\n);
}
\draw[->] (s0) to [loop left] (s0);
\draw[->] (t0) to [loop left] (t0);
\end{tikzpicture}
\end{center}
\caption{A reachability game where Player Max ($\Box$ states) has no optimal strategy.}
\label{fig:max-not-opt}
\end{figure}
\begin{example}
\label{ex:max-not-opt}
Consider the reachability game from Figure~\ref{fig:max-not-opt}.
Its game graph, $G$, has the set
\(
S
\coloneqq
\{
r_i, s_i, t_i
\mid
i \geq 0
\}
\)
of states,
partitioned by
$\sP{S}=\{s_i, t_i \mid i \geq 0\}$,
$\sMax{S}=\{r_i \mid i \geq 0\}$, and
$\sMin{S}=\emptyset$.
Transitions are
$s_0\tran{} s_0$, $t_0\tran{}t_0$,
and
$r_{i-1}\tran{}r_i$, $r_i\tran{}s_i$,
$s_i\tran{}s_{i-1}$,
$s_i\tran{}t_i$,
and $t_i\tran{}t_{i-1}$ for $i>0$.
Probabilities are always uniform.
The target set is $T=\{t_0\}$.
Clearly,
$\val{s_i}=1-2^{-i}$ for all $i\geq0$.
Thus $\val{r_i}=1$ for all $i\geq0$: for every $N> 0$,
choosing the transition $r_i\tran{}r_{i+1}$ for $i< N$,
and the transition $r_i\tran{}s_i$ for $i\geq N$,
is a $2^{-N}$-optimal strategy for Max.
Yet Max has no optimal strategy in any $r_i$, $i\geq 0$:
no strategy reaching some $s_j$ is optimal, and, on the other hand,
never reaching $s_j$ means never reaching $t$.
\end{example}

\section{Games without Player Min}
\label{sec:mdp}

\begin{proposition}
\label{prop:max-opt}
Let $\G=(G,W)$ be a stochastic game,
where
$G=(S,\tran{},\delta)$ and $\sMin{S}=\emptyset$,
\footnote{
These are also sometimes called
(minimizing) Markov Decision Processes (MDPs), see, e.g., \cite{BBEK11,BBE10,Puterman94}.
}
and $W$ is open.
If
(\ref{eq:acc-zero}) from Theorem~\ref{thm:opt}
is satisfied
then Player Max has an optimal strategy in all
states.
\end{proposition}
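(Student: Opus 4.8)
The plan is to reduce, via Assumption~\ref{as:tree}, to the case where $\G$ is a forest, all strategies are memoryless, and $W=\Reach{T}$ with $T$ absorbing, and then to exhibit a \emph{single} memoryless Max strategy $\sigma$ that is optimal in every state at once. The device is to track the value along a run: for a run $w$ set $X_n\coloneqq\val{w(n)}$, using $\val t=1$ for $t\in T$, so that $X$ is frozen at $1$ once $T$ is reached. I will choose $\sigma$ so that $(X_n)$ is a bounded submartingale which moreover converges almost surely to $\mathbf 1_{\Reach{T}}$. Given that, bounded convergence together with the submartingale inequality $\Ex{s}{\sigma}{X_n}\ge X_0=\val s$ yields $\Pr{s}{\sigma}{\Reach{T}}=\Ex{s}{\sigma}{X_\infty}\ge\val s$; since $\Pr{s}{\sigma}{\Reach{T}}\le\val s$ always, $\sigma$ is optimal in every state.

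The submartingale part is cheap. At a stochastic state $X$ makes a fair step, by the Bellman equation of Fact~\ref{fa:fixp}. At a Max state $s$ with $\val s>0$ it suffices that $\sigma$ move to a successor $r$ with $\val r=\val s$, and such a \emph{value-preserving} successor always exists: $\val s=\sup\{\val r\mid s\tran{}r\}$, and this supremum is attained, for otherwise infinitely many distinct successor values would accumulate at $\val s>0$, contradicting~(\ref{eq:acc-zero}). At a Max state of value $0$ any choice keeps $X$ nondecreasing, and value-$0$ states cause no trouble anyway (the run stays among them, a harmless trap never reaching $T$). Hence all the work lies in choosing \emph{which} value-preserving successor $\sigma$ takes, so as to force $X_n\to\mathbf 1_{\Reach{T}}$.

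As a bounded submartingale, $(X_n)$ converges a.s.\ to some $X_\infty$, a limit of values, so $X_\infty\in\{0\}\cup V\cup\{1\}$. On the event that the run never reaches $T$, $X_\infty$ is a limit of values of non-target states, and if $X_\infty=c>0$, then --- $c$ being isolated in $V$ by~(\ref{eq:acc-zero}) --- the run is eventually confined to the slice $D_c\coloneqq\{s\notin T\mid\val s=c\}$. Thus ``$X_\infty=\mathbf 1_{\Reach{T}}$ almost surely'' is equivalent to: for every $c>0$, almost surely the run is not eventually confined to $D_c$. I reduce the proof to the claim that \emph{for every $c>0$, from every state of $D_c$ Player Max, using only value-preserving moves while inside $D_c$, can almost surely reach $S\setminus D_c$}. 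Granting the claim, almost-sure reachability of $S\setminus D_c$ in the MDP obtained by keeping inside $D_c$ only value-preserving Max moves is witnessed by a memoryless strategy; since the slices $D_c$ and the value-$0$ states are pairwise disjoint, these witnesses combine with arbitrary choices elsewhere into a single memoryless $\sigma$, still value-preserving at every positive-value Max state, so the submartingale property above survives; and the standard argument that every entry into $D_c$ is a.s.\ followed by an exit shows the run is never eventually confined to any $D_c$.

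The claim just stated is the heart of the matter and, I expect, the main obstacle; it is here --- not in the mere existence of value-preserving successors --- that~(\ref{eq:acc-zero}) is used essentially. Example~\ref{ex:max-not-opt} shows what happens without it: in the slice $D_1=\{r_i\}$ the only exits are the edges $r_i\tran{}s_i$, each forfeiting only $2^{-i}\to0$ of value, so ``stay in $D_1$ forever'' is almost optimal and Max has no strategy committing to an exit. Under~(\ref{eq:acc-zero}) this cannot occur: letting $c^-$ be the largest value strictly below $c$ --- well defined because $0$ is the only possible accumulation point of $V$ --- every downward exit from a value-$c$ state forfeits at least the fixed gap $\gamma_c=c-c^->0$. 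The goal is to turn this into a proof that the set $Z\subseteq D_c$ of states from which value-preserving almost-sure escape of $D_c$ \emph{fails} is empty. One would describe $Z$ via the attractor (greatest-fixed-point) characterisation of almost-sure reachability --- a Max state lies in $Z$ only if \emph{all} its value-preserving successors do, a stochastic state only if \emph{some} successor does --- and then argue that, were $Z$ nonempty, the states of $Z$ could not reach $T$ at all, forcing $\val{\cdot}=0$ on $Z$ and contradicting $Z\subseteq D_c$ with $c>0$. The gap $\gamma_c$ is precisely what makes this dichotomy (``escapes $D_c$ almost surely'' versus ``has value $0$'') exhaustive: with only finitely many values above any positive threshold there is no room for a state that is ``arbitrarily close to escaping $D_c$'' but cannot actually do so --- the pathology of Example~\ref{ex:max-not-opt}. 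Making the almost-sure-reachability facts precise over a countably infinite state space, and handling $c=1$ (where $T$-edges leave $D_1$ ``sideways''), are the details I expect to be most delicate.
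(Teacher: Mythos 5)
Your overall architecture---reduce by Assumption~\ref{as:tree}, make the value process $X_n=\val{w(n)}$ a bounded submartingale by playing value-preservingly (legitimate: (\ref{eq:acc-zero}) does force the supremum $\val s=\sup\{\val r\mid s\tran{}r\}$ to be attained when $\val s>0$), and then obtain $X_\infty=\mathbf 1_{\Reach{T}}$ by showing no run is eventually confined to a positive-value slice $D_c$---is coherent and genuinely different from the paper's route. The problem is that the entire content of the proposition has been compressed into your boxed claim (value-preserving almost-sure escape from every $D_c$), and the argument you sketch for that claim does not work. First, there is no ``attractor / greatest-fixed-point characterisation of almost-sure reachability'' over a countably infinite state space: almost-sure reachability there is not determined by the transition structure but by the numerical probabilities (an asymmetric random walk on $\mathbb N$ reaches $0$ almost surely or not depending solely on the bias, on an identical graph). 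The two closure conditions you state for $Z$ are merely necessary and carry almost no information. Second, the inference ``$Z\neq\emptyset\Rightarrow$ states of $Z$ cannot reach $T$ at all $\Rightarrow\val{\cdot}=0$ on $Z$'' is a non sequitur: a state lies in $Z$ only because escape from $D_c$ cannot be forced with probability $1$; a stochastic state of $Z$ may still leave $D_c$, indeed enter $T$, with positive probability, so its value can certainly be positive. Tellingly, the gap $\gamma_c$, which you correctly identify as the operative consequence of (\ref{eq:acc-zero}), is never actually used anywhere in the sketched argument for the claim; an argument for this proposition that never quantitatively exploits (\ref{eq:acc-zero}) cannot be right, since Example~\ref{ex:max-not-opt} differs only in that hypothesis.

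The claim itself is true---it is in fact a consequence of Proposition~\ref{prop:max-opt}---but an independent proof of it is precisely where the work lies, and it is what the paper's three-step argument supplies. The paper shows: if optimality fails somewhere then some state $s$ is ``losing'', i.e.\ every sufficiently good strategy must with positive probability take a value-decreasing Max transition (Lemma~\ref{lem:non-opt}, whose proof contains the nontrivial construction of an optimal strategy by concatenating $\tfrac14\val{\cdot}$-optimal value-preserving strategies in stages, each stage capturing half of the remaining value); the associated expected loss $\ell_s$ is bounded away from $0$ (Lemma~\ref{lem:losing-bound}); and the value gap $\delta$ given by (\ref{eq:acc-zero}) then yields a uniform lower bound $p\geq\delta/(1+\delta-\ell_s^\eps)$ on the probability of incurring a loss, hence an upper bound $\val s-\min\{\eps,\delta p\}<\val s$ on what any strategy achieves---a contradiction. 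Some quantitative argument of this kind (or the staged concatenation itself) is what your escape claim needs in place of the fixed-point computation; as written, the proposal restates the difficulty rather than resolving it. A smaller issue: the assertion that almost-sure reachability, where it holds, is witnessed by a memoryless strategy in a countable, possibly infinitely branching MDP is true but nontrivial and would need justification or a citation.
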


We fix the game $\G$ from Proposition~\ref{prop:max-opt}
in the rest of this section, devoted to proving the proposition.
By
Assumption~\ref{as:tree}, $G$ is a forest,
and there is $T\subseteq S$ such that
$W=\bigcup\{\run{w} \mid \text{$w$ ends in $T$}\}$
and for all $t\in T$ there is only one transition: $t\tran{} t$.
The proof is by contradiction, in three steps.
First, we prove that if there is a state with no optimal strategy,
then there must be a state from which winning with probability sufficiently close
to the optimum implies the need to use some value decreasing transition.
A transition $r\tran{}s$ is value decreasing if $\val{r}>\val{s}$.
Second, we will argue that the potential ``damage'' caused by this transition is
positive and bounded away from $0$, independently of the actual strategy.
Third, we show that (\ref{eq:acc-zero}) implies that the potential ``damage''
factor is indeed bounding the probability of reaching $T$ away from the
value, which is a contradiction with the definition of the value.

We introduce a random variable, $L$ (for ``loss'').
For a run, $\omega$, a \emph{losing} index is every $i$, such that
$\omega(i)\in \sMax{S}$ and
$\val{\omega(i)}>\val{\omega(i+1)}$.
If there is no losing index for $\omega$, we set $L(\omega)\coloneqq 0$.
Otherwise, there is the least losing index, $i$, and we set
\(
L(\omega)
\coloneqq
\val{\omega(i)} > 0.
\)
Finally, we say that a state $s \in S$ is \emph{losing}
if there is some $\delta_s>0$ such that
for every $\delta_s$-optimal strategy, $\sigma$, in $s$,
we have
\(
\Pr{s}\sigma{L>0} > 0.
\)

\begin{lemma}
\label{lem:non-opt}
Assume (\ref{eq:acc-zero}).
If $\exists s\in S$
such that
\(
\forall \sigma:
\Pr{s}\sigma{W} < \val{s}
\)
then there is also some losing state.
\end{lemma}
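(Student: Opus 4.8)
The plan is to argue by contraposition combined with a "forward propagation of optimality" argument. Suppose no state is losing; I want to show that every state $s$ has an optimal strategy, i.e.\ some $\sigma$ with $\Pr{s}{\sigma}{W}=\val{s}$. Fix $s$; I may assume $\val{s}>0$, otherwise the empty requirement is met trivially. Since no state is losing, for every state $u$ and every $\delta>0$ there is a $\delta$-optimal strategy $\sigma$ in $u$ with $\Pr{u}{\sigma}{L>0}=0$; that is, almost every run under $\sigma$ never takes a value-decreasing transition out of a Max state. The idea is to build a single optimal strategy at $s$ by a limiting/diagonal construction over better and better such strategies, using condition~(\ref{eq:acc-zero}) to control the relevant value set.

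The key steps, in order: (1) Reformulate "no losing state" as: for every state $u$ and every $\delta>0$, there is a $\delta$-optimal $\sigma_u^\delta$ that with probability one stays inside the sub-forest $G_{=}$ obtained by deleting, from each Max state $r$, all edges $r\tran{}s$ with $\val{r}>\val{s}$. Note $G_{=}$ is still a well-defined game graph on the states reachable from $s$ once we check every relevant Max state still has an outgoing edge — which holds because $\val{r}=\sup\{\val{t}\mid r\tran{}t\}$ by Fact~\ref{fa:fixp}, so there are edges to states of value arbitrarily close to $\val{r}$, though not necessarily equal. (2) Here is the crux where (\ref{eq:acc-zero}) enters: let $\eps_0\coloneqq\val{s}>0$; by (\ref{eq:acc-zero}) the set $V_{\eps_0}$ of values $\geq\eps_0$ is \emph{finite}, say with smallest element $\mu>0$ and second-smallest gap $\gamma>0$. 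As long as a run stays in the region of states with value $\geq\eps_0$ and only uses non-value-decreasing Max transitions, the value along the run is nondecreasing through Max and stochastic states cannot all stay above their own average — so I will show the run is actually confined to states whose value lies in the finite set $V_{\eps_0}$, and Max-value is monotone; combined with the fact that from any such state a $\delta$-optimal strategy exists staying in $G_{=}$, a standard MDP argument (optimal strategies exist when the value set restricted to the reachable region is finite, because one can patch together locally near-optimal memoryless choices without accumulating loss) yields an exactly optimal strategy. (3) Conclude: the strategy so constructed reaches $T$ with probability exactly $\val{s}$, contradicting the hypothesis $\forall\sigma:\Pr{s}{\sigma}{W}<\val{s}$; hence some losing state must exist.

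The main obstacle I anticipate is step~(2): turning "for every $\delta$ there is a $\delta$-optimal strategy avoiding value-decreasing transitions" into "there is a $0$-optimal strategy", since $\delta$-optimal strategies for different $\delta$ need not converge and the branching is infinite. The finiteness of $V_{\eps_0}$ is exactly what rescues this — it lets me stratify the reachable states into finitely many value-levels, observe that along loss-free runs the Max-value is monotone nondecreasing, and therefore that a run can change value-level only finitely often before either reaching $T$ (value $1$) or being absorbed into a region where the value is constant; on such a constant-value region the game behaves like one where \emph{every} reachable state has the same value, and there Max trivially has an optimal (indeed any) strategy as far as not losing value is concerned, while the residual reachability probability is handled by passing to the limit of the $\delta$-optimal strategies level by level. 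I would need to be careful that the "patching" across the finitely many levels does not leak probability mass into $L>0$, but since each patch is loss-free with probability one and there are only finitely many levels, a union bound keeps the total loss probability zero, giving the exact optimum.
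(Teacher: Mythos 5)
Your overall frame --- contraposition, restricting Max to value-preserving edges, and invoking (\ref{eq:acc-zero}) --- matches the paper's, but both load-bearing steps have problems. First, the totality of your sub-forest $G_{=}$: you justify it by saying there are edges to states of value arbitrarily close to $\val{r}$, ``though not necessarily equal''. If no successor attains the supremum, then \emph{every} edge out of $r$ is value-decreasing, your deletion leaves $r$ with no successor, and $G_{=}$ is not a game graph. The correct argument (the one the paper uses) is that (\ref{eq:acc-zero}) forces the supremum to be attained: for $\val{r}>0$ the successor values lying in $V_{\val{r}/2}$ form a finite set with supremum $\val{r}$, and for $\val{r}=0$ all successors have value $0$.

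Second, and more seriously, your step (2) rests on pathwise claims that are false. The value is \emph{not} monotone along loss-free runs, and runs are \emph{not} confined to states with value in the finite set $V_{\eps_0}$: a stochastic state preserves value only in expectation, so an individual loss-free run can drop to arbitrarily small values and visit many distinct value levels (the transitions $s_i\tran{}s_{i-1}$ in Example~\ref{ex:max-not-opt} already strictly decrease the value at every stochastic step). Hence the ``finitely many level changes, then absorption into a constant-value region'' picture does not hold; and the appeal to a ``standard MDP argument'' that optimal strategies exist when the value set is finite is essentially the statement under proof (Proposition~\ref{prop:max-opt} in the no-losing-state case), so it cannot be cited. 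What actually closes the gap in the paper is an \emph{expectation} invariant rather than a pathwise one: in the value-preserving subgame, $\val{s}=\sum_{w\in\fpath{k}{s}}\Pr{s}{\sigma}{\run{w}}\cdot\val{w(k)}$ for every horizon $k$ and every $\sigma$. One then fixes, for each state $r$, a loss-free $\frac14\val{r}$-optimal strategy $\sigma_r$ and a horizon $n_r$ by which it reaches $T$ with probability at least $\val{r}/2$, and concatenates these in stages; the invariant forces $T$ to be reached with probability at least $(1-2^{-m})\val{s}$ after $m$ restarts, which converges to $\val{s}$. Note that (\ref{eq:acc-zero}) is needed only for the totality point above --- the finiteness of $V_{\eps_0}$ plays no further role. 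You would need to replace your step (2) with an argument of this kind.
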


\begin{proof}
By contradiction.
Assume there is no losing state, we construct
an optimal strategy in every state.
Define a subset $\btran{}$ of the transition relation $\tran{}$ of $\G$,
by setting for every pair $r,s \in S$:
$r \btran{} s$ iff $r \tran{} s$ and
either $r\in \sP{S}$, or
$\val{r} = \val{s}$.
Observe that (\ref{eq:acc-zero}) implies
that for all $r\in \sMax{S}$
there is at least one $s$ such that
$r \btran{} s$ and $\val{r} = \val{s}$.
Thus $\btran{}$ is total and $G'=(S,\btran{},\delta)$ is a game graph.
Without losing states,
for every $r\in S$ and every $\eps>0$ there
is some $\eps$-optimal strategy, $\sigma$, such that
$\Pr{s}\sigma{L>0}=0$, i.e., $\sigma$
does not use value-decreasing transitions.
This strategy works in $\G'=(G',W)$ as well,
winning with the same probability, as in $\G$.
The values in $\G$ and $\G'$ are thus the same.

Consider now $\G'$.
Denote by $\fpath{k}{s}$ the set of all finite paths of length $k$ starting in $s$.
Due to the last sentence in Assumption~\ref{as:tree},
and because $\btran{}$ preserves value,
the following is true in $\G'$:
\begin{equation}
\label{eq:pres}
\forall k\geq 0:
\forall \sigma:
\forall s\in S:
\val{s}
=
\sum_{w\in \fpath{k}{s}}
\Pr{s}\sigma{\run{w}}
\cdot
\val{w(k)}.
\end{equation}
For all $s\in S$ fix a $1/4\cdot\val{s}$-optimal
strategy $\sigma_s$. After some
$n_s\geq0$ of steps, $T$ must be reached from $s$
under $\sigma_s$ with probability at least $\val{s}/2$, as
\(
\Pr{s}{\sigma_s}{W}
=
\lim_{k\to\infty}
\Pr{s}\sigma{
\{
\run{w}
\mid
\len{w}\leq k
\land
w(k)\in T
\}
}
.
\)

For all $s\in S$ we finally construct
a strategy $\sigma$ for $\G'$, optimal in $s$.
Because the values are the same in $\G$ and $\G'$, and every strategy for $\G'$
is also a strategy for $\G$, this will finish the proof of the lemma.
The strategy $\sigma$
starts in $s$ according to $\sigma_s$, and follows it for $n_s$ steps.
After that, having arrived to some state $r$,
it switches to $\sigma_r$ and follows it for other $n_r$ steps.
This is repeated ad infinitum.
The invariant (\ref{eq:pres}), and the choice of $n_r$ and $\sigma_r$
for $r\in S$, guarantee that
after the $m$-th stage of the above repetitive process, $T$ has actually been
reached with probability $(1-2^{-m})\cdot\val{s}$, proving that $\sigma$ is
optimal.
\end{proof}

For every losing state, $s \in S$, 
and every constant $\eps>0$
we define
\(
\ell_s^\eps
\coloneqq
\inf \{ \Ex{s}\sigma{L} \mid \text{$\sigma $ is $\eps$-optimal in $s$}\}.
\)
Since $\ell_s^\eps \leq \ell_s^\zeta\leq 1$ for $\eps \geq \zeta$,
the limit
\(
\ell_s \coloneqq
\lim_{\eps \to 0}
\ell_s^\eps
\)
exists.

\begin{lemma}
\label{lem:losing-bound}
Assume (\ref{eq:acc-zero}).
For every losing state, $s$, in $\G$ we have
\(
\ell_s
>0.
\)
\end{lemma}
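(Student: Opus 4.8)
The plan is to argue by contradiction: suppose $\ell_s = 0$ for some losing state $s$. By definition of $\ell_s$ as the limit of the non-increasing quantities $\ell_s^\eps$, this means that for every $\zeta > 0$ we can find an $\eps$-optimal strategy $\sigma$ in $s$ (for some small $\eps$) with $\Ex{s}\sigma{L}$ arbitrarily small, in particular with $\Pr{s}\sigma{L>0}$ small and $\val{s} - \Pr{s}\sigma{W}$ small. I want to use such near-optimal strategies with vanishingly small expected loss to build (in the limit, or by a direct patching argument) a strategy witnessing that $s$ is \emph{not} losing, contradicting the hypothesis. The key leverage is condition (\ref{eq:acc-zero}): since $\val{s} \geq \eps_0 > 0$ for our fixed losing state, the set of values $\geq \eps_0/2$ (say) that can appear as $\val{\omega(i)}$ for a losing index $i$ with $L(\omega) = \val{\omega(i)}$ large enough is \emph{finite}. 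So there is a strictly positive gap: whenever a value-decreasing transition is taken from a state of value $\geq \val{s}/2$, the drop in value is at least some fixed $\gamma > 0$ determined by the finitely many values in $V_{\val{s}/2}$.

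The main steps, in order. First, formalize the gap: let $\eps_0 \coloneqq \val{s} > 0$ and let $V_{\eps_0/2} = \{v_1 > v_2 > \dots > v_m\}$ be the finite set of values $\geq \eps_0/2$; set $\gamma$ to be the minimum positive difference $v_i - v_{i+1}$, and also $\gamma \leq$ the distance from $\eps_0/2$ down to the next value below it (or to $0$). The point is that a run experiencing a losing index $i$ with $\val{\omega(i)} \geq \eps_0/2$ loses at least $\gamma$ in value at that step. Second, relate expected value loss along a run to the shortfall $\val{s} - \Pr{s}\sigma{W}$. Here I would use the submartingale/supermartingale structure: for a strategy $\sigma$ that never uses value-decreasing transitions the value is exactly preserved in expectation (this is essentially (\ref{eq:pres})), so the total expected ``value destroyed'' by value-decreasing transitions equals $\val{s}$ minus the limiting expected value, which in turn is at most $\val{s} - \Pr{s}\sigma{W}$ since target states have value $1$ and all values are $\leq 1$. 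Thus the expected total value-loss is $\leq \val{s} - \Pr{s}\sigma{W} \leq \eps$ for an $\eps$-optimal $\sigma$. Third, combine: each run with a first losing index at a state of value $\geq \eps_0/2$ contributes $\geq \gamma$ to $L$ and hence to the total value-loss — but we must be careful, because $L$ only records the \emph{first} losing index and its value could be small (between $0$ and $\eps_0/2$). So I would split $\Pr{s}\sigma{L>0}$ according to whether the first losing index sits at a state of value $\geq \eps_0/2$ or $< \eps_0/2$; on the first part, $\Ex{s}\sigma{L} \geq \gamma \cdot \Pr{s}\sigma{L \geq \eps_0/2}$, giving an upper bound $\eps/\gamma$ on that probability. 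For the second part — first loss at a low-value state — I would argue that reaching a losing index at value $< \eps_0/2$ without first passing through a losing index of higher value requires the value to have decreased from $\val{s} = \eps_0$ down to below $\eps_0/2$ along value-non-decreasing Max-moves and stochastic moves, which still costs expected value; more carefully, one tracks the value process and notes any descent below $\eps_0/2$ is again charged against the same total-loss budget $\eps$, with each ``crossing'' step contributing at least $\gamma$.

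Once these bounds are in place, the conclusion is that $\Pr{s}\sigma{L>0} \leq C\eps/\gamma$ for a universal constant $C$, so by taking $\eps \to 0$ along the near-optimal strategies realizing $\ell_s^\eps \to \ell_s = 0$, we would produce, for arbitrarily small $\delta_s > 0$, a $\delta_s$-optimal strategy with $\Pr{s}\sigma{L>0}$ as small as we like — and in fact, by a limiting/diagonal argument or by observing that the infimum defining ``losing'' cannot then be witnessed, we contradict the assumption that $s$ is losing (which demands $\Pr{s}\sigma{L>0} > 0$ for \emph{every} sufficiently-optimal $\sigma$, with a fixed threshold $\delta_s$). The main obstacle I anticipate is the bookkeeping around $L$ recording only the \emph{first} losing index: the clean inequality ``expected loss $\geq \gamma \cdot$ (probability of any loss)'' holds only when the first loss is at a high-value state, so the argument must either work with the full value-decrease process (not just $L$) and then relate it back to $L$, or carefully show that early losses at states of value $< \eps_0/2$ are themselves rare because the value cannot drop that far without incurring charged loss. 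Handling infinite branching in the $\eps$-optimal strategies, and making the "limit of strategies" step rigorous (likely by avoiding an actual limit and instead deriving a direct contradiction from a single sufficiently-optimal strategy), is the other delicate point.
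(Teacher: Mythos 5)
There are two genuine gaps here. The first and more fundamental one is that your contradiction target is wrong: being \emph{losing} means there is a fixed $\delta_s>0$ such that \emph{every} $\delta_s$-optimal strategy has $\Pr{s}{\sigma}{L>0}>0$, so to refute it you must exhibit a $\delta_s$-optimal strategy with $\Pr{s}{\sigma}{L>0}$ equal to $0$ exactly. Producing $\delta_s$-optimal strategies with $\Pr{s}{\sigma}{L>0}$ merely ``as small as we like'' contradicts nothing, and there is no infimum in the definition of losing that could ``fail to be witnessed''. The idea you are missing is a strategy surgery: given any $\sigma$, build $\bar\sigma$ that copies $\sigma$ up to (but not including) the first value-decreasing choice and from then on picks arbitrary value-preserving successors --- these exist at every state by (\ref{eq:acc-zero}) (a positive value cannot be an accumulation point of its successors' values, and at value $0$ all successors have value $0$). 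Then $\Pr{s}{\bar\sigma}{L>0}=0$ by construction, and $\Pr{s}{\sigma}{W}-\Pr{s}{\bar\sigma}{W}\leq \Ex{s}{\sigma}{L}$, since conditioned on the history up to the first losing index the remaining winning probability under $\sigma$ is at most $L$. With $\ell_s=0$ this gives $\sup_\sigma \Pr{s}{\bar\sigma}{W}=\val{s}$ and immediately contradicts $s$ being losing. Note that this argument needs no value gap at all.

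The second gap is that your claimed bound $\Pr{s}{\sigma}{L>0}\leq C\eps/\gamma$ is false. The problematic case you yourself identify --- first losing index at a state of value below $\eps_0/2$ --- cannot be patched the way you suggest, because the value process can descend below $\eps_0/2$ through \emph{stochastic} transitions, which preserve value only in expectation and are not charged against any loss budget (they are not losing indices). For instance, a stochastic state of value $1/2$ may move with probability close to $1/2$ to a Max state of tiny positive value $v$, from which a value-decreasing move yields $L=v$; then $\Ex{s}{\sigma}{L}$ is of order $v$ and arbitrarily small, while $\Pr{s}{\sigma}{L>0}$ is about $1/2$. The value-gap idea you sketch is genuinely needed in this paper, but in the final step of the proof of Proposition~\ref{prop:max-opt} --- where one already knows $\ell_s>0$ and only needs to control losses at states of value at least $\ell_s^\eps$ --- not in the present lemma.
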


\begin{proof}
By contradiction.
Assume that $s$ is losing and $\ell_s=0$.
To every strategy $\sigma$ which may possibly use
value-decreasing transitions $r\tran{} r'$
where $\val{r}>\val{r'}$ we consider a strategy
$\bar\sigma$, which copies the moves of $\sigma$ until a value-decreasing
transition is chosen. From that point on, just before the value-decreasing transition,
the strategy $\bar\sigma$ keeps choosing
arbitrary successors with the only requirement that they preserve the value,
i.e., whenever $\bar\sigma$ chooses a transition $s \tran{} s'$ with a positive
probability, $\val{s}=\val{s'}$.
Such a choice always exists, because $\sup_{s\tran{}s'}\val{s'}=\val{s}$, and
either $\val{s}=0$, in which case $\val{s'}=0$ for all $s',\ s\tran{}s'$, or
$\val{s}>0$, and by (\ref{eq:acc-zero}) $\val{s}>0$ cannot be an accumulation
point, so there is some $s'$, $s\tran{} s'$ with $\val{s}=\val{s'}$.
Observe that for every $\sigma$,
\(
\Pr{s}\sigma{W} - \Pr{s}{\bar\sigma}{W} \leq \Ex\sigma{s}{L}.
\)
As a consequence, due to $\ell_s=0$,
\(
\val{s}
=
\sup \{\Pr{s}{\bar\sigma}{W} \mid \sigma\text{ is some strategy}\}.
\)
This contradicts $s$ being losing, since
\(
\Pr{s}{\bar\sigma}{L>0}=0
\)
for every $\sigma$.
\end{proof}

\begin{proof}[Proof of Proposition~\ref{prop:max-opt}]
By contradiction.
Assume (\ref{eq:acc-zero}), and that there is some
$r\in S$ with no strategy optimal in $r$.
By Lemma~\ref{lem:non-opt}, there is a losing state, $s\in S$.
By Lemma~\ref{lem:losing-bound},
$\ell_s>0$.
Choose some $\eps>0$ such that $\ell_s^\eps \geq \ell_s / 2 > 0$.
Thus under every $\eps$-optimal strategy, $\sigma$, with some positive probability,
$p>0$, a state $r\in \sMax{S}$ with $\val{r} \geq \ell_s^\eps$
is visited, and some transition $r \btran{} r'$ with
$\val{r'} < \val{r}$ is taken.
Observe that (\ref{eq:acc-zero}) gives us the following
``value-gap'':
\[
\delta
\coloneqq
\inf \{ |\val{r}-\val{r'}| \mid r,r' \in S, \val{r}\neq\val{r'}, \val{r}\geq\ell_s^\eps\}
>0.
\]
This allows us to bound $p$ independently of $\sigma$, since
\(
\ell_s^\eps
\leq
\Ex\sigma{s}{L}
\leq p\cdot 1 + (1-p)(\ell_s^\eps-\delta)
\)
and hence
\[
p \geq \frac{\delta}{1+\delta-\ell_s^\eps} > 0.
\]
Thus for every strategy, $\sigma$, we have that
\(
\val{s} - \Pr{s}\sigma{W} \geq \min \{ \eps, \delta\cdot p\} > 0.
\)
This clearly contradicts the definition of $\val{s}$. The proof is finished.
\end{proof}

\newcommand{\lazRW}{Laz}
\newcommand{\lazRun}[1]{\lazRW(#1)}
\newcommand{\laz}[1]{laz(#1)}
\newcommand{\last}[1]{last(#1)}

\section{Reachability Games}
\label{sec:games}

In this section we prove Theorems~\ref{thm:opt} and~\ref{thm:sdet}.
Let us fix a game $\G=(G,W)$, where $G=(S,\tran{},\delta)$,
satisfying Assumption~\ref{as:tree}.
Also assume that $W$ is open, and thus 
there is $T\subseteq S$ such that
$W=\bigcup\{\run{w} \mid \text{$w$ ends in $T$}\}$.
We call a state $s$ \emph{safe} if
\(
\text{$\forall \sigma$ for Max}:
\text{$\exists \pi_\sigma$ for Min}:
\Pr{s}{\sigma,\pi_\sigma}{W}=0
.
\)
The following lemma states the strong determinacy restricted
to states with value $0$, and will be useful in proving
each of both theorems.

\begin{lemma}
\label{lem:sdet-zero}
If $\G$ satisfies (\ref{eq:acc-zero})
then
for every safe $s\in S$:
\(
\text{$\exists \pi$ for Min}:
\text{$\forall \sigma$ for Max}:
\Pr{s}{\sigma,\pi}{W}=0
.
\)
\end{lemma}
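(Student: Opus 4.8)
The plan is to fix a safe state $s$ and build a single strategy $\pi$ for Min that defeats all Max strategies from $s$, by a greedy/reachability argument on the set of value-$0$ states. First I would define $Z\coloneqq\{r\in S\mid \val{r}=0\}$ and observe that $s\in Z$ (since $s$ safe gives $\val s=0$ by the determinacy~(\ref{eq:det})). I want $\pi$ to guarantee that the run stays inside $Z$ forever; since all target states $t\in T$ have $\val t=1$, staying in $Z$ means never reaching $T$, i.e.\ $\Pr{s}{\sigma,\pi}{W}=0$ for every $\sigma$. The key structural facts are: (a) if $r\in Z\cap\sP S$ then by Fact~\ref{fa:fixp} $0=\val r=\sum_{r\tran{}q}\delta(r,q)\val q$, so every successor $q$ with $\delta(r,q)>0$ lies in $Z$; (b) if $r\in Z\cap\sMax S$ then $\val r=\sup\{\val q\mid r\tran{}q\}=0$, so \emph{all} successors of $r$ lie in $Z$; (c) if $r\in Z\cap\sMin S$ then $\val r=\inf\{\val q\mid r\tran{}q\}=0$, so at least there is a \emph{sequence} of successors whose values tend to $0$ — but, crucially, this infimum need not be attained, which is exactly where~(\ref{eq:acc-zero}) enters.

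Here is where I would use the hypothesis. By~(\ref{eq:acc-zero}), for every $\eps>0$ the set $V_\eps$ is finite; in particular the set of values is bounded away from $0$ except for the point $0$ itself. So the infimum in (c) over the successor values, if it equals $0$, must actually be \emph{attained}: there is a successor $q$ of $r$ with $\val q=0$, i.e.\ $q\in Z$. (Otherwise all successor values would be positive, hence $\geq\eps$ for some fixed $\eps>0$ by~(\ref{eq:acc-zero}), forcing $\val r\geq\eps>0$, contradiction.) Therefore from every state of $Z\cap\sMin S$ there is an edge staying in $Z$, and I let $\pi$ pick (deterministically, memorylessly — legitimate under Assumption~\ref{as:tree}) one such successor. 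Combining with (a) and (b): under $\pi$, if the run is in $Z$ at step $i$, then whoever moves (Max, Min, or chance), the run is still in $Z$ at step $i+1$ with probability $1$. Since $T\cap Z=\emptyset$, no run consistent with $\pi$ and starting in $s$ ever reaches $T$, giving $\Pr{s}{\sigma,\pi}{W}=0$ for all $\sigma$, as required.

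I should double-check one point: the definition of ``safe'' is not literally what I use — I derived $\val s=0$ from safeness, but actually I only need $\val s=0$, and safeness is presumably recorded in the statement because it is the hypothesis under which this $\val s=0$ conclusion is interesting (and because in the main proof the relevant states arrive pre-packaged as safe). So I would remark that the real content is: every value-$0$ state admits such a uniform $\pi$, and safe states have value $0$. The main obstacle is entirely the minimizer clause (c): without~(\ref{eq:acc-zero}) the infimum defining $\val r$ at a Min-state can fail to be attained, and then no single choice of successor keeps the run inside $Z$ — this is precisely the asymmetry the introduction warns about, and the finiteness of $V_\eps$ is exactly the tool that removes it. Everything else (the stochastic and Max cases, and assembling the memoryless $\pi$ into a measure-theoretic statement about $\Pr{s}{\sigma,\pi}{W}$) is routine given Fact~\ref{fa:fixp} and Assumption~\ref{as:tree}.
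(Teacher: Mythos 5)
There is a genuine gap, and it is located exactly where you flagged the ``crucial'' step: case (c), the Min states. Condition~(\ref{eq:acc-zero}) says that each $V_\eps$ is finite, i.e.\ that $0$ is the \emph{only} possible accumulation point of the set of values --- it does \emph{not} say that the positive values are bounded away from $0$. So from ``all successors of $r$ have positive value'' you cannot conclude ``all successor values are $\geq\eps$ for some fixed $\eps>0$'': a Min state $r$ with $\val r=0$ may have successors of values $1/2,1/4,1/8,\dots$, which is perfectly compatible with~(\ref{eq:acc-zero}), and then the infimum $0$ is not attained and no edge out of $r$ stays inside your set $Z$. Worse, your concluding remark that ``the real content is: every value-$0$ state admits such a uniform $\pi$'' is false, not just unproven: take $r\in\sMin S$ whose successors $s_n\in\sP S$ reach $T$ with probability $2^{-n}$ and otherwise enter a dead end. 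Then $\val r=0$ and~(\ref{eq:acc-zero}) holds, but \emph{every} strategy of Min gives $\Pr{r}{\sigma,\pi}{W}>0$. This is precisely why the lemma is stated for \emph{safe} states and not for value-$0$ states, and why the paper observes later that Min can lack optimal strategies exactly at value-$0$ states with no value-preserving edge.

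The fix is to run your closure argument on the set of safe states rather than on $Z$ (this is what the paper does). Safety, unlike having value $0$, is closed in the required sense by a purely game-theoretic argument: if $r\in\sP S\cup\sMax S$ is safe then every successor is safe (otherwise Max steers to, or chance hits, an unsafe successor and plays its witnessing strategy there, contradicting safety of $r$); and if $r\in\sMin S$ is safe then \emph{some} successor is safe, because if every successor $s$ had a Max strategy $\sigma_s$ winning with positive probability against all $\pi$, then the Max strategy that plays $\sigma_s$ after Min moves to $s$ would win with positive probability against every $\pi$, contradicting safety of $r$. Restricting the game to the safe states reachable from $s$ then yields a subgraph avoiding $T$ in which any Min strategy works, and it lifts back to $\G$ because Max's choices were not restricted. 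Note that this argument does not even use~(\ref{eq:acc-zero}); the hypothesis is carried in the statement but the content of this particular lemma is the unconditional closure property of safety.
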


\begin{proof}
We cut off some choices for Min in the game graph $G$ of $\G$,
and obtain its sub-graph $G'$, so that all states reachable in $G'$ from $s$ have value $0$
in $\G'=(G',W)$.
In particular, no run can satisfy $W$.
Because the choices of Max remain unrestricted
in $G'$, this ensures that the probability of $W$ is $0$ in $\G$ as well.
Let us proceed in more detail.

Observe that every safe state has value $0$, so no safe state
is in $T$.
Also, observe that for every safe $r\in\sP{S}\cup\sMax{S}$
and $s\in S$, if $r\tran{} s$ then $s$ is safe.
Likewise, if $r\in\sMin{S}$ is safe, then there must be a safe $s$ such that
$r\tran{} s$.
Fix a safe $s$, and define $G'$ as the smallest sub-graph of $G$
containing $s$ and satisfying that if $r$ is in $G'$, then so is
every safe successor $r'$ of $r$ in $G$.
As shown above, $G'$ is a game graph, the probability
assignment $\delta$ from $G$ is valid in $G'$ as well,
and all states in $G'$ are safe.
Hence, no paths in $G'$
visit $T$, and the value of every state in $\G'$ is $0$.
Fix an arbitrary strategy $\pi$ for Min in $\G'=(G',W)$,
then $\Pr{s}{\sigma,\pi}{W}=0$ for all $\sigma$ of Max in $\G'$.
All transitions out of safe states of Max
were preserved in $G'$,
and $\pi$ is also a strategy in $\G$, so we have
$\Pr{s}{\sigma,\pi}{W}=0$ also
for every $\sigma$ of Max in $\G$.
\end{proof}

\subsection{Proof of Theorem~\ref{thm:opt}}

\begin{lemma}
\label{lem:min-max-win}
If $\G$ satisfies (\ref{eq:acc-zero}),
then for all $s\in S$ we have:
\(
\text{$\forall \pi$ for Min}:
\text{$\exists \sigma$ for Max}:
\Pr{s}{\sigma,\pi}{W}\geq\val{s}
.
\)
\end{lemma}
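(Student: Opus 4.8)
The plan is to fix the Min strategy $\pi$ first and then solve the resulting one-player game. Since by Assumption~\ref{as:tree} all strategies are memoryless, fixing $\pi$ turns every state of $\sMin{S}$ into a stochastic state whose outgoing distribution is prescribed by $\pi$; write $\G_\pi$ for the resulting game without Player Min, and $\valG{\pi}{\cdot}$ for its value. Two elementary observations drive the argument. First, since $\pi$ is just one particular strategy for Min, $\valG{\pi}{s} = \sup_\sigma \Pr{s}{\sigma,\pi}{W} \geq \val{s}$ for every $s$. Second, measured against the \emph{original} values $\val{\cdot}$ of $\G$, the freshly created stochastic states are harmless: a state $m \in \sMin{S}$ satisfies $\val{m} = \inf\{\val{m'} \mid m\tran{}m'\}$, so every $\pi$-successor has $\val{m'} \geq \val{m}$ and hence $\sum_{m'} \pi(m,m')\val{m'} \geq \val{m}$. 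Together with the Bellman identity at stochastic states and the fact that Max's transitions are untouched, this shows that in $\G_\pi$ the quantity $\val{\cdot}$ is non-decreasing in expectation along any play in which Max uses only value-preserving transitions, and the only transitions that can strictly decrease $\val{\cdot}$ are Max's own choices.

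Consequently I would run the entire argument of Section~\ref{sec:mdp} inside $\G_\pi$, but with the loss variable $L$, the notion of a losing state, and the value-gap all defined through the original values $\val{\cdot}$ rather than through $\valG{\pi}{\cdot}$. Concretely, a losing index of a run is an $i$ with $\omega(i)\in\sMax{S}$ and $\val{\omega(i)} > \val{\omega(i+1)}$, a strategy $\sigma$ is ``good'' if $\Pr{s}{\sigma,\pi}{W}\geq\val{s}-\eps$, and a state is losing if every good $\sigma$ has $\Pr{s}{\sigma,\pi}{L>0}>0$; counting only Max-indices is justified by the second observation above. Assuming for contradiction that some $s$ admits no $\sigma$ with $\Pr{s}{\sigma,\pi}{W}\geq\val{s}$, the analogue of Lemma~\ref{lem:non-opt} produces a losing state: the value-preserving subgraph $\btran{}$ is still total on $\sMax{S}$ because that step uses only (\ref{eq:acc-zero}) for the $\G$-values, and the invariant (\ref{eq:pres}) now holds as the favourable inequality $\val{s}\leq\sum_{w\in\fpath{k}{s}}\Pr{s}{\sigma,\pi}{\run{w}}\cdot\val{w(k)}$. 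The analogue of Lemma~\ref{lem:losing-bound} then gives $\ell_s>0$, and the value-gap estimate of Proposition~\ref{prop:max-opt} --- whose gap $\delta>0$ again comes from (\ref{eq:acc-zero}) applied to the $\G$-values that are $\geq\ell_s^\eps$ --- yields a constant $c>0$ with $\Pr{s}{\sigma,\pi}{W}\leq\val{s}-c$ for every $\sigma$. Taking the supremum over $\sigma$ gives $\valG{\pi}{s}\leq\val{s}-c$, which contradicts $\valG{\pi}{s}\geq\val{s}$ and closes the argument.

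The main obstacle, and the reason the proof cannot simply quote Proposition~\ref{prop:max-opt} as a black box, is that $\G_\pi$ need not itself satisfy (\ref{eq:acc-zero}): a suboptimal $\pi$ can mix high- and low-value successors at Min states and thereby create values $\valG{\pi}{m}$ accumulating at a point other than $0$ (for instance mixing a value-$1$ and a value-$1/k$ successor with equal weight produces $\valG{\pi}{m_k}\to 1/2$). The remedy is exactly the decision to carry out the whole loss and value-gap analysis against the fixed reference function $\val{\cdot}$, which by hypothesis does satisfy (\ref{eq:acc-zero}). What must therefore be checked with care is that each place where Section~\ref{sec:mdp} silently used an equality ``value in equals value out'' --- the totality of $\btran{}$ on $\sMax{S}$, the invariant (\ref{eq:pres}), and the loss-bound coupling in Lemma~\ref{lem:losing-bound} --- survives when that equality is relaxed to the inequality supplied by the new stochastic states, and that this inequality always points in the direction favourable to Player Max.
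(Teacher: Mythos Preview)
The paper's own proof is a one-liner: fix $\pi$, turn every Min-state into a stochastic state with distribution $\pi$ to obtain the Min-free game $\G_\pi$, observe that $\Pr{s}{\sigma,\pi}{\cdot}$ in $\G$ equals $\Pr{s}{\sigma}{\cdot}$ in $\G_\pi$, and apply Proposition~\ref{prop:max-opt} to $\G_\pi$ as a black box; the optimal $\sigma$ it returns achieves $\valG{\pi}{s}\geq\val{s}$. You are right that, read literally, this skips over whether $\G_\pi$ itself satisfies (\ref{eq:acc-zero}), and your mixing example shows it need not. So your instinct to unfold Section~\ref{sec:mdp} rather than quote it is reasonable.

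Your repair, however, breaks at the step you yourself flag: the ``loss-bound coupling'' of Lemma~\ref{lem:losing-bound}. There the inequality $\Pr{s}{\sigma}{W}-\Pr{s}{\bar\sigma}{W}\leq\Ex{s}{\sigma}{L}$ is obtained by bounding, at the first losing state $r$, the continuation under $\sigma$ by the reference value at $r$, since in a Min-free game $\Pr{r}{\sigma}{W}\leq\val{r}=L(\omega)$. Transplanted into $\G_\pi$ with $\val{\cdot}$ kept as the reference, the continuation ceiling becomes $\valG{\pi}{r}$, which can strictly exceed $\val{r}$; so a small $\Ex{s}{\sigma,\pi}{L}$ no longer forces $\bar\sigma$ to be nearly as good as $\sigma$. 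The same reversal undermines the concluding estimate ``$\Pr{s}{\sigma,\pi}{W}\leq\val{s}-c$'': that step needs $\val{\cdot}$ to be a supermartingale along plays so as to upper-bound the winning probability, but at the former Min-states $\val{\cdot}$ \emph{increases} in expectation (your own observation $\sum_{m'}\pi(m,m')\val{m'}\geq\val{m}$), making it only a submartingale in $\G_\pi$. In short, the favourable direction of the Min-state inequality is exactly what makes your analogue of Lemma~\ref{lem:non-opt} go through, but it points the wrong way for Lemma~\ref{lem:losing-bound} and for the final contradiction, both of which require an \emph{upper} bound on winning probabilities in terms of $\val{\cdot}$.
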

\begin{proof}
For every (memoryless, due to Assumption~\ref{as:tree})
strategy $\pi$ of Player Min, we denote by $\G_\pi$
the game where the choices of Player Min are resolved using $\pi$.
Formally, $\G_\pi=(G',W)$, where $G'=(S',\btran{},\delta')$,
and (1) $S'=S$ but comes with a different partition:
$\sP{S'}=\sP{S}\cup\sMin{S}$,
$\sMax{S'}=\sMax{S}$,
$\sMin{S'}=\emptyset$,
(2) the relation $\btran{}\subseteq \tran{}$ is
given by
$r\btran{} s$ iff $r\tran{} s$ and either $r\in \sP{S}\cup\sMax{S}$,
or $r\in \sMin{S}$ and $\pi(r)(s)>0$,
and (3) $\delta'=\delta\cup\pi$.
For every strategy $\sigma$ for Player Max, and every $s\in S$
the measure $\Pr{s}{\sigma,\pi}{\cdot}$ in $\G$
obviously coincides with $\Pr{s}{\sigma}{\cdot}$ in $\G_\pi$.
Thus we may apply Proposition~\ref{prop:max-opt} to all
$\G_\pi$ to derive the lemma.
\end{proof}

Consider now the following game $\H=(H,W)$, which is a slight modification
of $\G$. The set of states of $H=(S,\btran{},\delta_H)$ is $S$, the same as in $G$, and
with the same partition.
There is a transition $r\btran{}s$ iff exactly one of these three situations
occurs:
$\val{r}=0$ in $\G$, and $s=r$;
or $\val{r}>0$, $r\in \sP{S}$ and $r \tran{} s$;
or $\val{r}>0$, $r\notin \sP{S}$, $r \tran{} s$,
and $\val{r}=\val{s}$ in $\G$.
In other words, in $H$ we made all states with value $0$ absorbing, and
only left value preserving transitions for players.
Finally, $\delta_H$ is the only probability weight function which coincides
with $\delta$ on stochastic states with positive value.

\begin{lemma}
\label{lem:val-pres}
If $\G$ satisfies (\ref{eq:acc-zero}), then
$H$ is a game graph, and the values
are the same in $\G$ and $\H$.
\end{lemma}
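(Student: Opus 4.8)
The goal is to show that $H$ is a legitimate game graph and that $\val{\cdot}$ is unchanged. Let me think about what $H$ does: it makes value-$0$ states absorbing and keeps only value-preserving transitions for Max and Min (stochastic states keep all their transitions, provided they have positive value).

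First, checking $H$ is a game graph: need totality of $\btran{}$ and that $\delta_H$ is a valid probability weight function. Value-$0$ states have the self-loop. A stochastic state $r$ with $\val r > 0$ keeps all its $\tran{}$-successors, so $\delta$ restricted there still sums to $1$, and totality is inherited. For $r \in \sMax{S} \cup \sMin{S}$ with $\val r > 0$: for Max, $\val r = \sup\{\val s \mid r \tran{} s\}$ by Fact~\ref{fa:fixp}, and since $\val r > 0$ is not an accumulation point of $V$ by (\ref{eq:acc-zero}), the sup is attained — so there is a value-preserving successor, giving totality. For Min, $\val r = \inf\{\val s \mid r \tran{} s\}$; again this is a minimum — actually I should check: the infimum of a set of values. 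If $\val r > 0$, then all successor values are $\geq \val r > 0$, and the set $V_{\val r} = \{v \in V \mid v \geq \val r\}$ is finite, so the infimum of any nonempty subset is attained. Hence Min also has a value-preserving successor. So $H$ is a game graph, and $\delta_H$ is uniquely and correctly defined (it must agree with $\delta$ on positive-value stochastic states, and those are the only stochastic states where it isn't forced to be a point mass on a self-loop... actually wait, value-$0$ stochastic states become absorbing in $H$, so $\delta_H$ there is the Dirac on the self-loop).

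Then, equality of values. The clean route is: the value function $\val{\cdot}$ of $\G$ is a fixed point of the Bellman functional $\V_H$ associated with $\H$ — this is where value-preservation of the retained edges does the work, exactly as in the proof of Corollary~\ref{cor:min-opt}. Indeed, for $s$ with $\val s = 0$, $\V_H(\val{\cdot})(s) = \val s$ since $s$ is either in $T$ (impossible, $T$-states have value $1$) — so $s \notin T$ — and the only transition is the self-loop, giving $\val s = 0$. For stochastic $s$ with $\val s > 0$, the transitions and probabilities are identical to $\G$, so the Bellman equation is inherited. For $s \in \sMax{S}$ with $\val s > 0$, $\V_H(\val{\cdot})(s) = \sup\{\val r \mid s \btran{} r\} = \val s$ since all retained edges preserve value and at least one exists; similarly for Min with $\inf$. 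So $\val{\cdot}$ is a fixed point of $\V_H$, hence $\val[\H]{s} \leq \val s$ for all $s$ by Fact~\ref{fa:fixp} (least fixed point). For the reverse inequality: Lemma~\ref{lem:min-max-win} says that in $\G$, against any Min strategy $\pi$, Max has a response winning with probability $\geq \val s$; but the proof of that lemma (via Proposition~\ref{prop:max-opt}) produces an \emph{optimal} Max strategy in $\G_\pi$, and — here is the point I need — one can take that optimal strategy to avoid value-decreasing transitions (from Lemma~\ref{lem:non-opt}'s construction, optimal strategies in the absence of losing states live in the value-preserving subgraph; and Proposition~\ref{prop:max-opt} guarantees no losing states exist when an optimal strategy exists). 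Hmm — more directly: an optimal strategy for Max cannot ever take a value-decreasing transition with positive probability along a path that has positive probability and still be optimal, by the invariant (\ref{eq:pres})-style argument. So optimal Max strategies in $\G$ are strategies in $\H$ (Min's value-preserving restriction only helps Max), winning with probability $\val s$ there too, giving $\val[\H]{s} \geq \val s$.

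The main obstacle I expect is the reverse inequality $\val[\H]{s} \geq \val s$ done carefully: one must argue that restricting to value-preserving edges does not hurt Max. The subtle asymmetry is that we are cutting edges for \emph{both} players, and cutting Min's edges could in principle help Min (lower the value) — but it cannot, because we only remove Min-edges to successors $s'$ with $\val r > \val{s'}$... wait, no: Min-edges with $\val{s'} < \val r$ are exactly the ones Min would \emph{want}, and we're removing them! Here (\ref{eq:acc-zero}) is essential: since $\val r > 0$ is not an accumulation point, $\val r = \inf\{\val{s'} \mid r \tran{} s'\}$ is attained, so there is no sequence of Min-successors with values decreasing toward something below $\val r$; the infimum equals the minimum, so removing the strictly-smaller ones (if any existed they'd contradict $\val r$ being the inf) — actually there are none with value $< \val r$, only $= \val r$ or $> \val r$, and we keep the $=\val r$ ones. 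So Min loses nothing. This is the crux and mirrors Corollary~\ref{cor:min-opt}; I would spell it out as: $\val{\cdot}$ of $\G$ is also a fixed point of $\V_{\H}$ \emph{and} (by the optimal-strategy argument above) is attained, so by squeezing between the two fixed-point facts the values coincide.
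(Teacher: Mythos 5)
Your proposal is correct and is essentially the paper's own proof: where the paper stages the construction as $H_1,H_2,H_3$ and invokes Corollary~\ref{cor:min-opt}, you run the Bellman fixed-point computation directly (same mechanism), and your argument for the reverse inequality --- against an optimal, hence value-preserving, $\pi$ the Max response supplied by Lemma~\ref{lem:min-max-win} attains exactly $\val{s}$ and therefore cannot use value-decreasing edges --- is precisely the paper's $H_2\to H_3$ step. The only point to tighten is the detour through optimal strategies of $\G_\pi$: value-preservation there is relative to the values of $\G_\pi$, which coincide with those of $\G$ only because the relevant $\pi$ (being value-preserving, hence optimal by Corollary~\ref{cor:min-opt}) forces $\sup_\sigma\Pr{r}{\sigma,\pi}{W}=\val{r}$ everywhere; your ``more direct'' martingale-style argument via (\ref{eq:pres}) is the cleaner justification and is what the paper implicitly uses.
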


\begin{proof}
We refine the modifications from above into three steps,
obtaining game graphs $H_0=G$, $H_1$, $H_2$, and $H_3=H$.
We will show for each $i\in\{1,2,3\}$ that $H_i$ is a game graph,
and that the values are the same in $\H_i=(H_i,W)$ as they are in $\G$.
All the graphs constructed have the same set of states, $S$, and the same partition,
as $G$, and the same weight function, $\delta_H$, as $H$.

$H_1=(S,\ctran{},\delta_H)$, and
$r\ctran{}s$ iff
$\val{r}=0$ in $\G$, and $s=r$,
or $\val{r}>0$ and $r \tran{} s$.
$H_1$ is clearly a game graph, because $\ctran{}$ is total.
The values did not change, because each absorbing loop outside of $T$ has
value $0$.
Moreover, every $r\in\sMin{S}$ has always a successor with the same value.
Indeed,
if $\val{r}=0$ then $r$ itself is its own successor in $G_1$;
if $\val{r}>0$ then $\inf_{r\ctran{}s}\val{s}=\val{r}$, and
by (\ref{eq:acc-zero}), since $\val{r}>0$ cannot be an accumulation
point, there is some $s$, $r\ctran{} s$ with $\val{r}=\val{s}$.
By Corollary~\ref{cor:min-opt}, Min has optimal strategies
in $\H_1$.

$H_2=(S,\ftran{},\delta_H)$, and
$r\ftran{}s$ iff $r \ctran{} s$ and
either
$\val{r}=0$ in $\G$,
or $r\notin \sMin{S}$,
or (if $\val{r}>0$ and $r\in \sMin{S}$)
$\val{r}=\val{s}$ in $\G$.
Because Min has always value-preserving transitions in $\H_1$,
$H_2$ is clearly a game graph, and
by Corollary~\ref{cor:min-opt}
all strategies of Min in $\H_2$ are optimal.
Fix one such $\pi$ for Min, and an arbitrary $s\in S$.
By Lemma~\ref{lem:min-max-win} there is a $\sigma$ for Max
in $\G$ (and thus also in $\H_2=(H_2,W)$) such that
$\Pr{s}{\sigma,\pi}{W}\geq\val{s}$.
Because $\pi$ is optimal, $\sigma$ cannot choose
value-decreasing transitions.
Thus, even when only using edges in $\btran{}$, i.e., from $H_3=H$,
we still obtain that
$\inf_\pi\sup_\sigma\Pr{s}{\sigma,\pi}{W}=\val{s}$.
Thus also the graph $H$ is a game graph, and
the values in $\H$ and $\G$ are the same.
\end{proof}

\begin{lemma}
\label{lem:max-opt}
If $\G$ satisfies (\ref{eq:acc-zero}),
then Player Max has an optimal strategy, $\sigma$, in $\H$.
\end{lemma}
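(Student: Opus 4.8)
The plan is to work entirely inside the modified game $\H$, where, by Lemma~\ref{lem:val-pres}, the values agree with those in $\G$ and where every transition is value-preserving (for players) or leads to an absorbing value-$0$ state. The key gain from this modification is that Max can no longer ``waste'' value: along any run in $\H$ the value of the current state is non-increasing, and in fact constant as long as we stay away from the frozen value-$0$ region. First I would invoke Lemma~\ref{lem:min-max-win} (applied in $\H$, which still satisfies (\ref{eq:acc-zero}) since its value set equals that of $\G$): for the optimal Min strategy $\pi$ fixed in the proof of Lemma~\ref{lem:val-pres}, there is a Max strategy winning with probability $\geq \val{s}$ from every $s$. But more is needed — I want a single strategy optimal simultaneously in all states, and robust against \emph{all} Min strategies, not just one $\pi$.

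The heart of the argument is to reduce to Proposition~\ref{prop:max-opt}. For every memoryless Min strategy $\pi$, form $\H_\pi$ exactly as in the proof of Lemma~\ref{lem:min-max-win}, i.e.\ resolve Min's choices by $\pi$, turning Min states into stochastic states. Since in $\H$ every outgoing transition from a non-stochastic state preserves the value, the same is true in $\H_\pi$; hence the Bellman characterisation (Fact~\ref{fa:fixp}) shows the values in $\H_\pi$ coincide with those in $\H$, and $\H_\pi$ still satisfies (\ref{eq:acc-zero}). By Proposition~\ref{prop:max-opt}, Max has an optimal strategy in $\H_\pi$. The subtlety is that this optimal strategy a priori depends on $\pi$. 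To get a uniform one, I would instead build the strategy directly, mimicking the construction in the proof of Lemma~\ref{lem:non-opt}: because no transition in $\H$ is value-decreasing, the invariant (\ref{eq:pres}) holds in $\H$ for \emph{every} pair of strategies $(\sigma,\pi)$ — that is, $\val{s}=\sum_{w\in\fpath{k}{s}}\Pr{s}{\sigma,\pi}{\run{w}}\cdot\val{w(k)}$ for all $k$. So for each state $r$ fix (via Proposition~\ref{prop:max-opt} applied to a suitable Min-free reduct, or via the value being attained) a Max strategy $\sigma_r$ and a stage length $n_r$ such that, against any Min behaviour, the probability of having reached $T$ within $n_r$ steps from $r$ is at least $\val{r}/2$; then concatenate these stage strategies, restarting $\sigma_r$ whenever the play enters a fresh state $r$. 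The invariant (\ref{eq:pres}) guarantees that after $m$ stages the probability of reaching $T$ is at least $(1-2^{-m})\val{s}$, against every Min strategy, so the limit strategy $\sigma$ is optimal in every $s$ in $\H$.

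The main obstacle I expect is precisely the dependence on $\pi$: Proposition~\ref{prop:max-opt} produces, for each fixed $\pi$, a Max strategy optimal in $\H_\pi$, but an optimal strategy in $\H$ must defeat the worst-case $\pi$ simultaneously. The resolution is the observation that value-preservation of all transitions in $\H$ makes the ``loss'' random variable $L$ identically $0$ along every run, so there is \emph{no} losing state in $\H$ in the sense of Section~\ref{sec:mdp}; this is exactly the hypothesis under which the concatenation construction in Lemma~\ref{lem:non-opt} yields an optimal strategy, and that construction is already uniform over Min's choices because it only uses the invariant (\ref{eq:pres}) and the per-state quantities $n_r$, both of which are $\pi$-independent. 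A secondary, purely bookkeeping point is to check that ``attaining $\val{r}/2$ within finitely many steps'' survives the presence of Min: since $\Pr{r}{\sigma_r,\pi}{W}\geq \val{r}$ for the relevant $\sigma_r$ uniformly in $\pi$ and $W$ is a countable increasing union of the finite-horizon reachability events, $\lim_{k}\Pr{r}{\sigma_r,\pi}{\{\run{w}\mid \len{w}\leq k,\ w(k)\in T\}}\geq\val{r}$, and one picks $n_r$ large enough; a uniform choice of $n_r$ over all $\pi$ follows because, once Max has committed to $\sigma_r$, the finite-horizon probabilities are monotone in $k$ and bounded below by a quantity Min cannot influence below $\val{r}$.
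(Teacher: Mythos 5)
Your overall strategy --- exploit that all player transitions in $\H$ preserve value, restart a ``half-optimal'' stage strategy at the beginning of each stage, and use a conservation identity like (\ref{eq:pres}) to show that each stage converts at least half of the remaining value mass into reached-$T$ mass --- is the same geometric-accounting idea as the paper's proof. The gap is in how you delimit a stage. You end a stage after a fixed, state-dependent number of steps $n_r$, and claim $n_r$ can be chosen uniformly in Min's strategy because the finite-horizon probabilities are ``monotone in $k$ and bounded below by a quantity Min cannot influence below $\val{r}$.'' This conflates $\lim_k \inf_\pi$ with $\inf_\pi \lim_k$: for each fixed $\pi$ the $k$-step reachability probabilities increase to $\Pr{r}{\sigma_r,\pi}{W}\geq\val{r}-\eps$, but the infimum over $\pi$ of the $k$-step probabilities can remain $0$ for every finite $k$. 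Since $\G$ (hence $\H$) is not assumed finitely branching, a Min state $m$ with $\val{m}=v>0$ may have infinitely many value-$v$ successors $m_1,m_2,\dots$ such that from $m_i$ the play provably spends at least $i$ steps before $T$ can be hit with positive probability; then no finite $n_r$ works against all $\pi$, and the inductive step ``after the next stage at least half of the remaining value has been converted'' fails at the first such state.

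This is precisely the difficulty the paper's ``laziness'' device is built to avoid: a stage under $\tau_s$ ends not after a fixed horizon but at the first moment the history $w$ becomes \emph{lazy}, i.e.\ when $\inf_\pi\Pr{s}{\tau_s,\pi}{W\mid\run{w}}=0$ --- a stopping rule that depends on the history but not on $\pi$. With $\tau_s$ chosen $\tfrac{1}{2}\val{s}$-optimal, a stage boundary at state $r$ still leaves conditional win probability at least $\tfrac{1}{2}\val{r}$ for the next stage (inequality (\ref{eq:lazy-half})), and the conservation identity (\ref{eq:pres2}) applied to the prefix-free family of stage-boundary histories gives the bound $\val{s}(1-2^{-k})$ by induction on the laziness index. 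To repair your argument you would need to replace the fixed $n_r$ by such a $\pi$-independent, history-dependent stopping rule (or restrict to finitely branching Min states, where a compactness argument could rescue the uniform $n_r$). The rest of your outline --- in particular the observation that $\H$ has no value-decreasing player transitions and hence no ``losing'' states in the sense of Section~\ref{sec:mdp} --- is sound but does not by itself supply the missing uniformity.
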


\begin{proof}
We first describe $\sigma$, then we prove that it is optimal.
In every state, $s$, there is some
$1/2\cdot\val{s}$-optimal strategy, $\tau_s$,
for Max.
We call a history (i.e., a finite path), $w$, starting in some state $s$,
and ending in $r$, \emph{lazy},
if
$\val{r}>0$ and
\(
\inf_\pi
\Pr{s}{\tau_s,\pi}{W \mid \run{w}}
=0
.
\)
Observe that each history, $w$, can be uniquely split into
a sequence of sub-paths, divided by single states,
$w = s_0 w_0 s_1 w_1 s_2 \cdots s_k w_k$,
$k\geq 1$, $s_i \in S$, $w_i \in S^*$,
such that
for all $i<k$, $s_i w_i s_{i+1}$ is lazy,
and
for all $i\leq k$, $s_i w_i$ is not lazy.
We call $k$ the \emph{laziness index} of $w$,
written $\laz{w}$
and $s_k w_k$ the \emph{non-lazy} suffix of $w$.
We now define $\sigma$ for a history $w$ with a non-lazy
suffix $s_k w_k$ by
\(
\sigma(w)
\coloneqq
\tau_{s_k}(s_k w_k)
.
\)

Now we prove that $\sigma$ is optimal.
To do so, we need to extend the laziness index to runs.
For a run, $\omega$, we set
\(
\lazRun\omega
\coloneqq
\sup
\{
\laz{w}
\mid
\omega \in \run{w}
\}
\in
\mathbb{N}\cup\{\infty\}
.
\)
Thus we defined a random variable, $\lazRW$.
We prove the following claim, which clearly implies
the statement of the lemma:
\begin{equation}
\label{eq:s-opt}
\forall s \in S:
\text{$\forall \pi$ for Min}:
\forall k\geq 0:
\Pr{s}{\sigma,\pi}{W \land \lazRW \leq k}
\geq
\val{s}
\cdot
(1-2^{-k})
.
\end{equation}
By induction on $k$. Fix some $s\in S$, and a strategy, $\pi$, for Min.
Clearly, (\ref{eq:s-opt}) is true for $k=0$.
Also it is true when $\val{s}=0$.
Assume thus $\val{s}>0$ and $k=\ell+1$ for some $\ell\geq0$.
We set $L$ to be the set of all finite paths, $w$,
such that $\laz{w}=k$ and the non-lazy suffix
only consists of one state.
Denote by $\last{w}$ the last state of $w$.
Observe that,
by the definition of $\sigma$ and $\tau_{s_k}$,
\begin{equation}
\label{eq:lazy-half}
\forall w\in L:
\text{$\forall \pi$ for Min}:
\Pr{s}{\sigma,\pi}{W \mid \run{w}}
\geq
1/2\cdot\val{\last{w}}
.
\end{equation}
Let $\Lambda$ be any prefix-free set of finite paths
such that $\Pr{s}{\sigma,\pi}{\bigcup_{w\in\Lambda}\run{w}}=1$.
Because $\H$ only contains value-preserving edges for players,
we have
\begin{equation}
\label{eq:pres2}
\val{s}
=
\sum_{w\in \Lambda}
\Pr{s}{\sigma,\pi}{\run{w}}
\cdot
\val{\last{w}}.
\end{equation}
We have
\(
p
\coloneqq
\Pr{s}{\sigma,\pi}{W \land \lazRW \leq \ell}
\geq
\val{s}\cdot(1 - 2^{-\ell})
,
\)
by the inductive hypothesis.
We also have
\(
q
\coloneqq
\sum_{w\in L}
\Pr{s}{\sigma,\pi}{\run{w}}
\cdot\val{\last{w}}
=\val{s}-p
,
\)
by (\ref{eq:pres2}).
By (\ref{eq:lazy-half}), $\Pr{s}{\sigma,\pi}{W \land \lazRW = k}\geq q\cdot 1/2$.
Finally,
\begin{align*}
\Pr{s}{\sigma,\pi}{W \land \lazRW \leq k}
&=
\Pr{s}{\sigma,\pi}{W \land \lazRW \leq \ell}
+
\Pr{s}{\sigma,\pi}{W \land \lazRW = k}
\\
&
=
p
+
q\cdot1/2
=
p+
(\val{s}-p)\cdot1/2
=
p/2 + \val{s}/2
\\
&
\geq
(2^{-1} - 2^{-(\ell+1)})\cdot\val{s}
+
\val{s}\cdot2^{-1}
=
(1 - 2^{-(\ell+1)})\cdot\val{s}
.
\end{align*}
\end{proof}

\begin{proof}[Proof of Theorem~\ref{thm:opt}]
Consider the strategy $\sigma$ from Lemma~\ref{lem:max-opt}.
It partially defines a strategy in $\G$.
To complete its definition, we now specify
it for histories containing a transition of the form $r\tran{}s$, where
$r\in\sMin{S}$ and $\val{s}>\val{r}$, by requiring
$\sigma$ to behave as a $1/2\cdot(\val{s}-\val{r})$-optimal
strategy since that point.
Fix an initial state, $s$, and consider an arbitrary strategy, $\pi$, of Min.
If $\pi$ is optimal, then it is also valid in $\H$,
and $\Pr{s}{\sigma,\pi}{W}=\val{s}$ by Lemmata~\ref{lem:val-pres}
and~\ref{lem:max-opt}.
For a non-optimal $\pi$ it is easy to verify that
$\Pr{s}{\sigma,\pi}{W}>\val{s}$ by both the definition of $\sigma$,
and Lemmata~\ref{lem:val-pres} and~\ref{lem:max-opt}.
\end{proof}

\subsection{Proof of Theorem~\ref{thm:sdet}}

If both players have optimal strategies, the game is strongly determined.
However, even under Condition~(\ref{eq:acc-zero}),
Player Min may not always have an optimal
strategy,
because of states with value $0$, without value-preserving
transition for Min available.
See the game in~\cite[Fig.~1]{BBKO11} restricted to states reachable from $s$,
for an example.
Theorem~\ref{thm:sdet} is a direct consequence of Lemma~\ref{lem:easy-det}
and Lemma~\ref{lem:hard-det},
where the former lemma deals with all ``easy cases'', and the latter
``patches'' the above deficiency
by using Lemma~\ref{lem:sdet-zero} to deal with states with value $0$,
and ``restoring'' the optimal strategies for both players in the rest.

\begin{lemma}
\label{lem:easy-det}
Assume that $\G$ satisfies (\ref{eq:acc-zero}).
Let $s\in S$, $0\leq\nu\leq 1$,
and $\genGt\in\{>,\geq\}$.
Assume that either $\val{s}=0$, or $\nu\neq\val{s}$,
or $\genGt=\geq$.
Then
either Player Max has a strategy $\bar\sigma$
such that
\(
\forall \pi:
\Pr{s}{\bar\sigma,\pi}{W}
\genGt
\nu
,
\)
or Player Min has a strategy $\bar\pi$
such that
\(
\forall \sigma:
\Pr{s}{\sigma,\bar\pi}{W}
\genNGt
\nu
.
\)
\end{lemma}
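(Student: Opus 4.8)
The plan is to run a short case analysis on the position of $\nu$ relative to $\val{s}$, reducing each case either to the existence of $\eps$-optimal strategies (available for every $\eps>0$ by~(\ref{eq:det})), to the existence of an optimal strategy for Player Max (Theorem~\ref{thm:opt}, which applies since $\G$ satisfies~(\ref{eq:acc-zero})), or to Lemma~\ref{lem:sdet-zero}. Throughout I would use that $\Pr{s}{\sigma,\pi}{W}>\nu$ entails $\Pr{s}{\sigma,\pi}{W}\genGt\nu$ for either admissible $\genGt$, and dually that $\Pr{s}{\sigma,\pi}{W}<\nu$ entails $\Pr{s}{\sigma,\pi}{W}\genNGt\nu$.

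First I would dispose of $\nu\neq\val{s}$. If $\nu<\val{s}$, pick $\eps\coloneqq(\val{s}-\nu)/2>0$ and an $\eps$-optimal $\bar\sigma$ for Max in $s$; then $\Pr{s}{\bar\sigma,\pi}{W}\geq\val{s}-\eps=(\val{s}+\nu)/2>\nu$ for every $\pi$, so Max wins. Symmetrically, if $\nu>\val{s}$, an $\eps$-optimal $\bar\pi$ for Min with $\eps\coloneqq(\nu-\val{s})/2$ satisfies $\Pr{s}{\sigma,\bar\pi}{W}\leq\val{s}+\eps<\nu$ for every $\sigma$, so Min wins. This covers every instance with $\nu\neq\val{s}$, regardless of $\genGt$.

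It then remains to treat $\nu=\val{s}$, where by hypothesis either $\val{s}=0$ or $\genGt$ is $\geq$. When $\genGt$ is $\geq$, I would invoke Theorem~\ref{thm:opt} to obtain an optimal $\bar\sigma$ for Max in $s$, for which $\Pr{s}{\bar\sigma,\pi}{W}\geq\val{s}=\nu$ for all $\pi$; this is exactly $\Pr{s}{\bar\sigma,\pi}{W}\genGt\nu$, so Max wins. The only sub-case left is $\val{s}=0$ with $\genGt$ equal to $>$, hence $\nu=0$. Here I would split on whether $s$ is safe: if $s$ is safe, Lemma~\ref{lem:sdet-zero} yields a single $\bar\pi$ for Min with $\Pr{s}{\sigma,\bar\pi}{W}=0$ for all $\sigma$, and since $\genNGt$ is $\leq$ in this sub-case, $0\leq\nu$ gives the Min alternative; if $s$ is not safe, then by the very definition of ``safe'' there is a $\bar\sigma$ for Max with $\Pr{s}{\bar\sigma,\pi}{W}\neq 0$, hence $\Pr{s}{\bar\sigma,\pi}{W}>0=\nu$, for every $\pi$, giving the Max alternative.

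I expect this last sub-case ($\val{s}=0$, $\nu=0$, $\genGt$ equal to $>$) to be the only genuine obstacle: Max cannot in general force a strictly positive winning probability, and Min need not have a uniformly optimal strategy, because a value-$0$ state may lack a value-preserving successor for Min. The observation that makes it go through is that the negation of ``$s$ is safe'' is literally the Max alternative of the conclusion, so the dichotomy safe/not-safe matches the dichotomy to be proved, with Lemma~\ref{lem:sdet-zero} supplying the safe branch. All the other cases are routine consequences of determinacy~(\ref{eq:det}) and of Theorem~\ref{thm:opt}, both of which are already available at this point in the paper.
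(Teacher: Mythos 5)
Your proof is correct and follows essentially the same route as the paper: $\eps$-optimal strategies for the cases $\nu\neq\val{s}$, an optimal strategy for Max from Theorem~\ref{thm:opt} when $\nu=\val{s}$ and $\genGt$ is $\geq$, and Lemma~\ref{lem:sdet-zero} for $\nu=\val{s}=0$. Your explicit safe/not-safe dichotomy in the last sub-case spells out a step the paper leaves implicit -- Lemma~\ref{lem:sdet-zero} only applies to safe states, and the negation of safety is literally the Max alternative -- which is a worthwhile clarification rather than a deviation.
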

\begin{proof}
The case when $\nu=\val{s}=0$ is solved by Lemma~\ref{lem:sdet-zero}.
If $\nu<\val{s}$, we can choose any $1/2\cdot(\val{s}-\nu)$-optimal
strategy for Max as $\bar\sigma$.
Similarly,
if $\nu>\val{s}$, we can choose any $1/2\cdot(\nu-\val{s})$-optimal
strategy for Min as $\bar\pi$.
If $\nu=\val{s}$ and $\genGt=\geq$,
we can choose any optimal
strategy for Max as $\bar\sigma$.
Such a strategy exists due to Theorem~\ref{thm:opt}.
\end{proof}

It remains to solve $\nu=\val{s}>0$ and $\genGt=>$.
We do two preprocessing steps on $\G$ to first obtain $\G'$, and then $\H$.
In $\H$ both players will have optimal strategies,
and we will be able to lift such a strategy for Min back to $\G$
iff Max does not have a strategy ing $\G$ to always win with probability $>\val{s}$.

We fix $s\in S$ with $\val{s}>0$ and set
\(
R
\coloneqq
\{
r \in S
\mid
\val{r}=0
\land
\exists \bar\sigma:
\forall \pi:
\Pr{r}{\bar\sigma,\pi}{W} > 0
\}
.
\)
Intuitively,
if Max does not have a strategy to always win with probability $>\val{s}$,
then Min can always respond to a strategy
$\sigma$ of Max with a $\pi_\sigma$, so that $R$ is not visited
at all from $s$ under these strategies, and yet Max wins
with probability at most $\val{s}$.
Thus, if we cut off all states from $R$, producing the game $\H$,
we obtain a valid game graph, and the values of states will not change.

Before we describe this formally, we observe that neither of the players
benefits from using transitions which do not preserve the value.
Let $\G'=(G',W)$, $G'=(S,\ctran{},\delta)$ be a game given by restricting
the edges of $G$ to value-preserving where possible:
for all $r,r'\in S$ we require that
$r\ctran{} r'$ iff $r\tran{}r'$ and either $r\in \sP{S}\cup R$,
or $\val{r}=\val{r'}$.

\begin{lemma}
\label{lem:Gprime}
Assume that $\G$ satisfies (\ref{eq:acc-zero}).
Then the values in $\G$ and in $\G'$ are the same, and
for all $s\in S$, each of the following is true in $\G'$ if it is true in $\G$:
\begin{gather}
\label{eq:max-non-win}
\text{$\forall \sigma$ for Max}:
\text{$\exists \pi_\sigma$ for Min}:
\Pr{s}{\sigma,\pi_\sigma}{W}\leq\val{s}
,
\\
\label{eq:min-non-win}
\text{$\forall \pi$ for Min}:
\text{$\exists \sigma_\pi$ for Max}:
\Pr{s}{\sigma_\pi,\pi}{W}>\val{s}
.
\end{gather}
\end{lemma}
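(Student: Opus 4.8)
The plan is to handle the value-preservation claim first and the two transfer properties afterwards, since the latter rest on the former. For the values, I would argue exactly as in Corollary~\ref{cor:min-opt} and Lemma~\ref{lem:val-pres}: restricting edges never increases the set of strategies, so values in $\G'$ are at most those in $\G$; conversely, since $\G'$ is obtained by removing only value-decreasing transitions from Max-, Min-, and $R$-states, one checks that $\val{\cdot}$ (computed in $\G$) is still a fixed point of the Bellman functional $\V'$ of $\G'$ --- the key point being that (\ref{eq:acc-zero}) guarantees that every state with positive value retains at least one value-preserving successor, and $R$-states (value $0$) are handled by keeping all their edges. Hence by Fact~\ref{fa:fixp} the least fixed point, i.e.\ the value in $\G'$, is at least $\val{\cdot}$, and equality follows. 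One must also verify $\ctran{}$ is total so $G'$ is a genuine game graph, which is again immediate from (\ref{eq:acc-zero}) for positive-value states and from keeping all edges of stochastic/$R$/zero-value states.

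For (\ref{eq:max-non-win}): suppose it holds in $\G$. Given $\sigma$ for Max in $\G'$, note $\sigma$ is also a (memoryless) strategy in $\G$, so there is $\pi_\sigma$ in $\G$ with $\Pr{s}{\sigma,\pi_\sigma}{W}\le\val{s}$. The subtlety is that $\pi_\sigma$ may use value-decreasing Min-edges absent from $\G'$. I would compose $\pi_\sigma$ with a value-preserving completion in the style of the $\bar\sigma$ construction in Lemma~\ref{lem:losing-bound}: once Min would take a value-decreasing edge, have it instead switch to value-preserving choices (which exist by (\ref{eq:acc-zero})); since from a Min-state moving to a strictly smaller value only decreases Max's winning probability, the modified $\pi_\sigma'$ is a strategy in $\G'$ still satisfying $\Pr{s}{\sigma,\pi_\sigma'}{W}\le\val{s}$.

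For (\ref{eq:min-non-win}): suppose it holds in $\G$. Given $\pi$ for Min in $\G'$, it is in particular a strategy in $\G$, so there is $\sigma_\pi$ in $\G$ with $\Pr{s}{\sigma_\pi,\pi}{W}>\val{s}$; symmetrically modify $\sigma_\pi$ to $\sigma_\pi'$ by replacing any value-decreasing Max-move with a value-preserving one, which can only increase Max's winning probability from a Max-state, so $\Pr{s}{\sigma_\pi',\pi}{W}>\val{s}$ and $\sigma_\pi'$ lives in $\G'$. I expect the main obstacle to be the bookkeeping in these two modification arguments: making precise that ``switching to value-preserving successors'' is always possible (this is exactly where (\ref{eq:acc-zero}) is used, since a positive value cannot be an accumulation point) and that the inequality on winning probabilities goes the right way for each player --- the asymmetry here (Max never loses by going to an equal-value successor instead of a smaller one, Min never loses by doing likewise) mirrors the asymmetry emphasised throughout the paper, and must be invoked carefully.
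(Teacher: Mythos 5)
Your outline is the right one, but two of its load-bearing steps are stated backwards or too optimistically. First, the value-equality argument does not close. From ``$\val{\cdot}$ (computed in $\G$) is a fixed point of the Bellman functional $\V'$ of $\G'$'' Fact~\ref{fa:fixp} gives you that the \emph{least} fixed point is \emph{at most} $\val{\cdot}$, i.e.\ the value in $\G'$ is $\leq$ the value in $\G$ --- the opposite of what you wrote. Likewise, ``restricting edges never increases the set of strategies, so values in $\G'$ are at most those in $\G$'' is unsound: shrinking Min's edge set shrinks the set over which the inner infimum ranges and can only push the value \emph{up}. So the direction $\valG{\G'}{s}\geq\valG{\G}{s}$ is never actually established, and it is the substantive one: you must exhibit a Max strategy that uses only value-preserving edges yet still secures $\val{s}$ against every Min strategy. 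This is precisely where the paper invokes Theorem~\ref{thm:opt} (the optimal strategy of Lemma~\ref{lem:max-opt} is built inside the value-preserving subgraph $\H$, hence survives in $\G'$); no purely Bellman-functional argument will produce it.

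Second, both transfer steps rest on the claim that replacing a non-value-preserving move by a value-preserving one moves the winning probability the right way. That statement compares \emph{values}, but what has to be compared are realized probabilities against the \emph{fixed} opponent strategy, and these can diverge. For (\ref{eq:min-non-win}) this is fatal as written: a Min strategy $\pi$ of $\G'$ is still unconstrained on the value-zero set $R$, hence need not be optimal, and $\sigma_\pi$ may be winning from a low-value successor $r'$ with probability exceeding $\val{r}$ precisely by exploiting $\pi$'s choices inside $R$; swapping in ``a value-preserving move plus some value-preserving continuation'' only guarantees $\val{r}$ in the worst case, so it can strictly \emph{decrease} $\Pr{s}{\cdot,\pi}{W}$ and lose the strict inequality $>\val{s}$. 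The paper circumvents this by first freezing $\pi$ outside $R$ to form the game $\G_\pi$, checking via Corollary~\ref{cor:min-opt} that its values equal those of $\G$, and letting Max switch to an optimal strategy \emph{of $\G_\pi$} rather than of $\G$. Symmetrically, for (\ref{eq:max-non-win}) you must specify what Min plays after the switch (the paper: a $\frac{1}{2}(\val{r'}-\val{r})$-optimal strategy) and treat the case of an optimal $\sigma$ separately; ``switch to value-preserving choices'' by itself does not yield $\Pr{s}{\sigma,\pi'_\sigma}{W}\leq\val{s}$.
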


\begin{proof}
By Theorem~\ref{thm:opt}, there is an optimal strategy, $\sigma$, for Min.
This is also a strategy for $\G'$, thus for all $s\in S$, $\val{s}$
in $\G'$ is at least $\val{s}$ in $\G$.
On the other hand, by Corollary~\ref{cor:min-opt}, cutting off non-optimal
edges leaving states from $\sMin{S}\setminus R$ does not alter the values.
Further, cutting off non-optimal edges from $\sMax{S}$ could only decrease the values.
Thus, for all $s\in S$, the values in $\G'$ and $\G$ are equal.

Now we fix some $s\in S$, and prove that if (\ref{eq:max-non-win}) is true in $\G$
then it is true in $\G'$.
Let $\sigma$ be a strategy for Max in $\G'$, i.e., it is a strategy for $\G$ which does
not use value-decreasing edges.
If $\sigma$ is optimal, then the strategy $\pi_\sigma$ from (\ref{eq:max-non-win}) in $\G$
necessarily has to use value-preserving edges everywhere, and thus it is valid in $\G'$ as well.
If $\sigma$ is not optimal, consider again the response $\pi_\sigma$ of Min
to satisfy (\ref{eq:max-non-win}) in $\G$. If $\pi_\sigma$ cannot be used directly
in $\G'$, then there must be some $r\in \sMin{S}\setminus R$ where
$\pi_\sigma$ chooses a successor $r'$ with $\val{r'}>\val{r}$.
But because $r\notin R$, there must also be a successor $r''$
such that $\val{r''}=\val{r}$. We modify $\pi_\sigma$ to a $\pi'_\sigma$, which
chooses for all such $r$ the value-preserving successor instead of $r'$,
and continues as a $1/2\cdot(\val{r'}-\val{r})$-optimal strategy in $\G'$.
Clearly,
\(
\Pr{s}{\sigma,\pi'_\sigma}{W}
\leq
\Pr{s}{\sigma,\pi_\sigma}{W}
\)
in $\G$, and since $\pi'_\sigma$ is also a strategy in $\G'$, (\ref{eq:max-non-win}) is true in $\G'$
as well.

Finally, we prove that if (\ref{eq:min-non-win}) is true in $\G$
then it is true in $\G'$.
Let $\pi$ be a strategy in $\G'$.
Fix the choices of $\pi$ in $\G'$ outside of $R$ to define a game $\G_\pi$.
By Corollary~\ref{cor:min-opt}, $\G_\pi$ has the same values as $\G$.
Thus, optimal strategies of Max in $\G_\pi$ exist, because $\G_\pi$ satisfies (\ref{eq:acc-zero}),
and only choose edges preserving the value in $\G$.
Consider the strategy $\sigma_\pi$ witnessing (\ref{eq:max-non-win}) in $\G$.
We now define a strategy $\sigma'_\pi$ in $\G'$:
it copies moves of $\sigma_\pi$ in $\G'$, unless
$\sigma_\pi$ chooses some value-decreasing edge.
In that case, instead of following $\sigma_\pi$,
$\sigma'_\pi$ immediately switches to some optimal strategy
for $\G_\pi$.
Since the values in $\G_\pi$ are the same as in $\G$,
this only increases the probability of winning,
thus
\(
\Pr{s}{\sigma'_\pi,\pi}{W}
\geq
\Pr{s}{\sigma_\pi,\pi}{W}
.
\)
\end{proof}

\begin{lemma}
\label{lem:hard-det}
Assume that $\G$ satisfies (\ref{eq:acc-zero}).
For all $s\in S$ such that
$\val{s}>0$, if
\begin{equation}
\label{eq:max-min-lose}
\text{$\forall \sigma$ for Max}:
\text{$\exists \pi_\sigma$ for Min}:
\Pr{s}{\sigma,\pi_\sigma}{W}\leq\val{s}
,
\end{equation}
then
\begin{equation}
\label{eq:min-max-lose}
\text{$\exists \pi$ for Min}:
\text{$\forall \sigma$ for Max}:
\Pr{s}{\sigma,\pi}{W}\leq\val{s}
.
\end{equation}
\end{lemma}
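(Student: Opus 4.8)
The plan is to remove the set $R$ from the game. By Lemma~\ref{lem:Gprime} I may replace $\G$ by $\G'$: the values are unchanged and (\ref{eq:max-min-lose}) (which is exactly (\ref{eq:max-non-win})) still holds. It is also enough to establish (\ref{eq:min-max-lose}) in $\G'$, because a strategy for Min in $\G'$ is one in $\G$, and if a strategy $\sigma$ for Max in $\G$ uses a value-decreasing edge then, as in the proof of Lemma~\ref{lem:Gprime}, replacing that move by an optimal continuation in the corresponding MDP $\G_\pi$ (which has the same values by Corollary~\ref{cor:min-opt}) can only increase Max's winning probability. So I argue from now on inside $\G'$, where every edge leaving a state outside $\sP{S}\cup R$ preserves the value; in particular no edge into a value-$0$ state leaves a positive-value state of Max.

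The core step is to show, using (\ref{eq:max-min-lose}), that $R$ is unreachable from $s$ in a strong sense: $s$ lies outside the set $A$ of states from which Max has a strategy forcing, against every strategy of Min, a visit to $R$ with positive probability. (Since $\G'$ is a forest, $A$ is a well-defined Max-attractor, and $R\subseteq A$.) Suppose $s\in A$. Then I would construct a strategy for Max that beats $\val{s}$ against \emph{every} response of Min, contradicting (\ref{eq:max-min-lose}): Max plays towards $R$ by the attractor strategy, and the instant a state $r\in R$ is actually reached Max switches to the strategy witnessing $r\in R$, which wins with positive probability against every Min strategy; on all histories that never reach $R$, Max plays an optimal strategy, which exists by Theorem~\ref{thm:opt}. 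The contribution of the sub-games entered through $R$ then turns from worthless (those states have value $0$) to strictly positive, while the remaining sub-games, played optimally, still robustly deliver their values. A consequence is that no stochastic state with an edge into $R$ is reachable from $s$ without first visiting $R$, since from such a state Max forces a visit to $R$ in one step with positive probability.

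Now let $\H$ be the sub-game of $\G'$ obtained by deleting all states of $A$; note $s$ survives, since $s\notin A$ and $\val{s}>0$. Then $\H$ is a valid game graph: as $A$ is a Max-attractor, no Max- or stochastic state outside $A$ has a successor in $A$, so those states of $\H$ keep all their successors (in particular every positive-value stochastic state keeps its whole distribution); a Min-state outside $A$ keeps a successor outside $A$, and --- as for value-$0$ states in the proof of Lemma~\ref{lem:sdet-zero}, and by (\ref{eq:acc-zero}) for positive-value ones --- a value-preserving one. All edges deleted from states of $\H$ are value-preserving edges of Min into same-value targets of which an alternative survives, so $\val{\cdot}$ is unchanged on the states of $\H$. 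Hence every Min-state of $\H$ has a successor of equal value, so Corollary~\ref{cor:min-opt} gives Min an optimal strategy $\bar\pi$ in $\H$, and Theorem~\ref{thm:opt} gives Max an optimal one.

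Finally, extend $\bar\pi$ arbitrarily to histories of $\G$ that have left $\H$. Fix any $\sigma$ for Max in $\G'$; the general case in $\G$ reduces to this as in the first paragraph. Started at $s$, the play $(\sigma,\bar\pi)$ never leaves $\H$: $\bar\pi$ keeps Min in $\H$, Max cannot move from a state of $\H$ into $A\supseteq R$, and no stochastic state of $\H$ leads into $R$. Hence the play is a play of $\H$, where optimality of $\bar\pi$ gives $\Pr{s}{\sigma,\bar\pi}{W}\le\val{s}$, the value of $s$ being the same in $\H$ and in $\G$. This is (\ref{eq:min-max-lose}). I expect the obstacle to be the second paragraph: turning $s\in A$ into a Max strategy that strictly beats $\val{s}$ against every response of Min needs care, because steering towards $R$ may lose value and the gain from the $R$-patch need not be bounded away from $0$; balancing these, essentially using that Max has optimal strategies, is the delicate point, together with the bookkeeping that the pruned game $\H$ retains the right values.
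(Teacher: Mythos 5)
Your proposal follows the paper's proof almost step for step: pass to $\G'$ via Lemma~\ref{lem:Gprime}; remove the positive-probability attractor $Att(R)$ of $R$; check that the pruned graph is a valid game graph in which every surviving Min-edge preserves the value, so that Corollary~\ref{cor:min-opt} yields an optimal $\bar\pi$ for Min there; and observe that $\bar\pi$ confines every play against every $\sigma$ to the pruned game, where it guarantees $\leq\val{s}$. All of that is sound and is exactly what the paper does. The step you single out as delicate --- deducing $s\notin Att(R)$ from (\ref{eq:max-min-lose}) --- is genuinely the crux: it is the only place the hypothesis is used, and it is also the step the paper itself treats only informally (in the ``Intuitively'' paragraph before Lemma~\ref{lem:Gprime}; the formal proof silently assumes $s\notin Att(R)$ when it asserts that every state of $S'$ lies outside $Att(R)$).

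As written, though, your argument for that step is not yet a proof. ``On all histories that never reach $R$, Max plays an optimal strategy'' does not define a strategy, since at a finite history Max cannot know whether $R$ will be reached later; and the obvious repair --- follow the attractor strategy for $n$ steps, then switch to an optimal strategy --- fails because no single cut-off $n$ need work against all $\pi$ when Min branches infinitely. The way to close the gap is to use the fact that you are already in $\G'$, where every edge available to Max, and every edge of Min out of a state not in $R$, preserves the value. Realise $Att(R)$ as the usual (transfinite) attractor $A=\bigcup_\alpha A_\alpha$ with $A_0=R$, and show by induction on the rank of $r\in A$ that Max has $\sigma^*_r$ with $\Pr{r}{\sigma^*_r,\pi}{W}>\val{r}$ for \emph{every} $\pi$: for $r\in R$ take the witness of membership in $R$; for $r\in\sMax{S}$ move to a successor of smaller rank, which has the same value since Max's edges in $\G'$ preserve value; for $r\in\sMin{S}\setminus R$ all successors have smaller rank and the same value, and a (countable) convex combination of numbers each exceeding $\val{r}$ still exceeds $\val{r}$; for $r\in\sP{S}$ use $\sigma^*_{r'}$ on a smaller-rank successor $r'$ with $\delta(r,r')>0$ and an optimal strategy (Theorem~\ref{thm:opt}) on every other successor, so the total strictly exceeds $\sum_{r\tran{}r'}\delta(r,r')\cdot\val{r'}=\val{r}$. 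This disposes of both of your worries at once: steering towards $R$ costs nothing because the player edges involved are value-preserving in $\G'$, and no uniform lower bound on the gain is needed because strictness is preserved pointwise in $\pi$. With this in hand, $s\in Att(R)$ contradicts (\ref{eq:max-min-lose}), and the remainder of your argument goes through as the paper's does.
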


\begin{proof}
By Lemma~\ref{lem:Gprime}, if
(\ref{eq:max-min-lose}) $\implies$ (\ref{eq:min-max-lose})
in $\G'$ then the implication holds in $\G$ as well,
and if $\G$ satisfies (\ref{eq:acc-zero}) then so does $\G'$.
Thus we focus on $\G'$ instead.
We describe the modification of $\G'$, called $\H$, where we cut off $R$.
By $Att(R)$ we denote the set of all states, $r$, such that in $\G'$
Max has a strategy, $\sigma$, such that for all $\pi$ for Min,
$\Pr{r}{\sigma,\pi}{\text{Reach $R$}}>0$.
Further, we consider the edge relation $\btran{}$, which is simply
the relation $\ctran{}$ without edges leading to
states from $Att(R)$.

We fix some $s$, $\val{s}>0$, satisfying (\ref{eq:max-min-lose}), and
by $S'$ we denote the subset of all $r\in S$
to which there is a path from $s$ in the graph $(S,\btran{})$.
Consider a game graph, $H=(S',\btran{},\delta)$, inheriting
the partition of states from $G$.
The edge relation is the $\btran{}$ defined above, only
restricted to $S'\times S'$.
Observe that if $r\in \sP{S'}\cup\sMax{S'}$ and $r\tran{} r'$ for some $r'\in S$,
then $r'\in S'$. This is because $r\notin Att(R)$ implies $r'\notin Att(R)$
if $r$ is not owned by Min.
Similarly, for all $r\in\sMin{S'}$ there is a $r'\in S'$
such that $r\btran{}r'$.
Thus $\delta$, restricted to $S'$, is still a valid probability
weight function, and $H$ is a valid game graph.
We abuse the letter $W$ to denote a restriction of $W$ to $\H$, and
define a game $\H=(H,W)$.

Because all edges leaving states from $\sMin{S}\setminus R$
were value-preserving in $\G'$, Corollary~\ref{cor:min-opt} yields that the values
stay the same in $\H$ as they were in $\G'$,
and there is an optimal strategy, $\bar\pi$ for Min in $\H$. This is also a strategy for $\G'$,
and because the choices of Player Max were not affected when reducing
$\G'$ to $\H$, we obtain, that for all $\sigma$ for Max we have
$\Pr{s}{\sigma,\bar\pi}{W}\leq\val{s}$ both in $\H$ and in $\G'$.
This proves (\ref{eq:min-max-lose}).
\end{proof}

\section{One Counter Games}
\label{sec:oc}

One Counter stochastic games (OC-SSGs),
see, e.g.,~\cite{BBEK11,BBE10,BBEKW10},
are games played on transition graphs of one-counter automata.
Such automata have a finite \emph{control-state} unit, $Q$,
and a set of rules, which are triples
of the form $(r,k,s)$ with $r,s\in Q$ and $k\in\{-1,0,+1\}$.
States of an OC-SSG are then of the form $s_n$ where
$s\in Q$ is a \emph{control state},
and $n\geq 0$ is an integer, representing the \emph{counter value}.
Transitions are generated
by setting $r_i\tran{}s_j$ if $i>0$ and there is a rule
$(r,j-i,s)$. Moreover, states with counter $0$ are
made absorbing, $s_0\tran{}s_0$, to reflect that the system halts
with the empty counter.
The partition of states is induced by a partition of $Q$,
and the probabilities of
transitions out of stochastic states,
are induced by probabilities on rules.
OC-SSGs come with an implicit reachability objective,
the set to be reached is the set $\{s_0 \mid s\in Q\}$
of states with counter $0$.
Because the system halts in $0$
we also call this a \emph{termination} winning condition.

\begin{figure}
\begin{minipage}[b]{0.5\linewidth}
\centering
\begin{tikzpicture}[x=1.5cm,y=1.5cm,>=angle 90]
\node	(s0)	at ( 3, 0)	[max,fill=gray]	{$s_0$};
\node	(d0)	at ( 1, 0)	[ran,fill=gray]	{$d_0$};
\node	(u0)	at ( 2, 0)	[ran,fill=gray]	{$u_0$};
\node	(r0)	at ( 4, 0)	[ran,fill=gray]	{$r_0$};
\node	(z0)	at ( 5, 0)	[ran,fill=gray]	{$z_0$};
\node	(t0)	at ( 6, 0)	[ran,fill=gray]	{$t_0$};

\foreach \n in {1,2,3}
{
\node	(s\n)	at ( 3, \n)	[max]	{$s_\n$};
\node	(d\n)	at ( 1, \n)	[ran]	{$d_\n$};
\node	(u\n)	at ( 2, \n)	[ran]	{$u_\n$};
\node	(r\n)	at ( 4, \n)	[ran]	{$r_\n$};
\node	(z\n)	at ( 5, \n)	[ran]	{$z_\n$};
\node	(t\n)	at ( 6, \n)	[ran]	{$t_\n$};
}

\node	(s4)	at ( 3, 4)		{$\vdots$};
\node	(d4)	at ( 1, 4)		{$\vdots$};
\node	(z4)	at ( 4, 4)		{$\vdots$};
\node	(t4)	at ( 6, 4)		{$\vdots$};

\foreach \x/\y/\z in {0/1/2,1/2/3,2/3/4}
{
\draw[->] (s\y) to (u\y);
\draw[->] (u\y) to (s\z);
\draw[->] (u\y) [bend right] to (d\y);
\draw[->] (d\y) [bend right] to (u\y);
\draw[->] (d\y) [out=270] to (s\x);
\draw[->] (s\y) to (r\y);
\draw[->] (r\y) to (z\y);
\draw[->] (r\y) [bend right] to (t\y);
\draw[->] (z\y) to (z\x);
\draw[->] (t\y) to (t\z);
}

\draw[->] (d4) [out=270] to (s3);
\draw[->] (z4) to (z3);

\draw[->] (s0) [loop right] to (s0);
\draw[->] (u0) [loop right] to (u0);
\draw[->] (d0) [loop right] to (d0);
\draw[->] (r0) [loop right] to (r0);
\draw[->] (z0) [loop right] to (z0);
\draw[->] (t0) [loop right] to (t0);
\end{tikzpicture}
\end{minipage}
\hspace{0.5cm}
\begin{minipage}[b]{0.4\linewidth}
\centering
\begin{tikzpicture}[x=1.7cm,y=1.7cm]
\node	(s)	at (0.5, 0)	[max]	{$s$};
\node	(rg)	at (  0, 1)	[ran]	{$d$};
\node	(rb)	at (  1, 1)	[ran]	{$u$};
\node	(t)	at (2.5, 0)	[ran]	{$r$};
\node	(g)	at (  2, 1)	[ran]	{$z$};
\node	(b)	at (  3, 1)	[ran]	{$t$};

\path[->] (s) edge [bend left=25] (rb)
 (rb) edge [bend left=30] node  [right] {$+1$} (s)
 (rg) edge [bend right=30] node [left] {$-1$} (s)
 (rb) edge [bend right=30] (rg)
 (rg) edge [bend right=30] (rb)
 (s) edge (t)
 (t) edge (g)
 (t) edge (b)
 (g) edge [loop above] node [above] {$-1$} (g)
 (b) edge [loop above] node [above] {$+1$} (b);
\end{tikzpicture}
\caption{Left: A game, $\G$, where player Max ($\Box$) does not have optimal strategies. All stochastic ($\bigcirc$) states
have uniform distribution on outgoing transitions.
Right: A One Counter description of $\G$.
Signed numbers represent counter increments.}
\label{fig:ocmdp-non-max}
\end{minipage}
\end{figure}
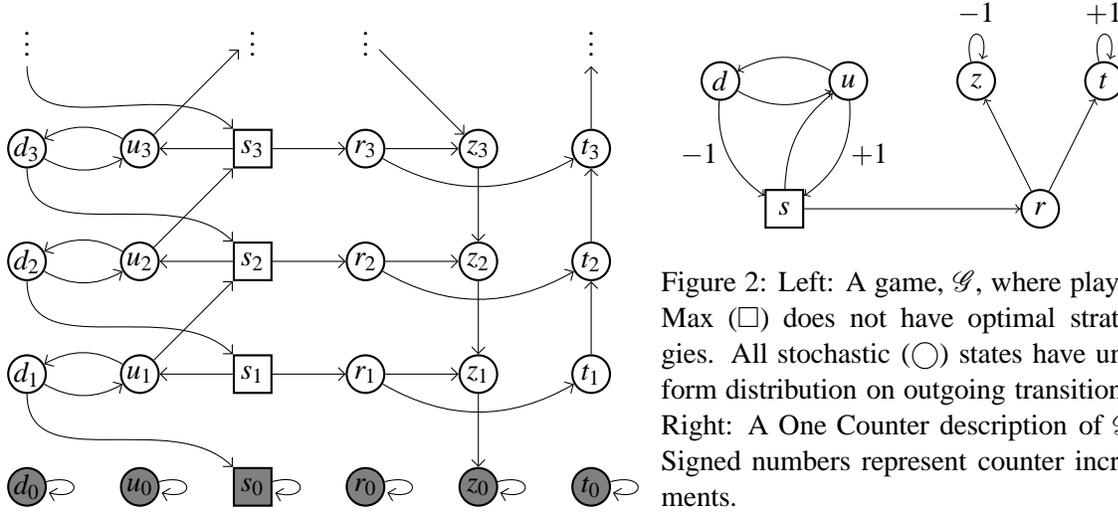

\begin{example}
\label{ex:noopt}
In the right-hand part of Figure~\ref{fig:ocmdp-non-max} we give
the one-counter automaton with the set $Q=\{s,u,d,r,z,t\}$ of control states.
An unlabelled edge, like $s\tran{} u$,
represents a $0$-rule, e.g., $(s,0,u)$.
A label ($\pm 1$) represents the counter change, e.g., the loop
$t\tran{}t$ represents $(t,+1,t)$. The square-state $s$ belongs
to Max, other states are stochastic. The distributions on outgoing transitions
are implicitly uniform in this example.
In the left-hand part is the generated OC-SSG.
Grey states are to be reached.
Later in this section we will show that $\val{s_i}=\frac{2^i+1}{2^{i+1}}$,
but no strategy of Player Max is optimal in $s_i$.
Observe that $1/2=\lim_{i\to\infty}\frac{2^i+1}{2^{i+1}}$ is an
accumulation point in the set of all values.
\end{example}

Note that every OC-SSG has bounded out-degree and in-degree, in particular
it is finitely branching. Thus Min has always optimal strategies
in OC-SSGs. However, they may not always satisfy
(\ref{eq:acc-zero}), and Example~\ref{ex:noopt} shows that
in OC-SSGs, Max may have no optimal strategies.
On the other hand, the structure of the accumulation points in
the set of all values is well understood for OC-SSGs.
To describe it, we need to introduce another winning objective.

In OC-SSGs there is an implicit boundary on the counter value -- if
it reaches zero, the system halts.
However, we may also interpret the one-counter automaton
as a directed graph on $Q$, with the rules as edges with \emph{rewards}.
This way we obtain a finite game graph.
Accumulating those rewards along a run in such a game graph
then corresponds to observing the counter in the OC-SSG, with the exception
that the counter does not stop in $0$ and may get negative.
Adding the winning condition (for Max) that the $\liminf$ of the
accumulated rewards be $-\infty$, we just defined $\CN$-games.

In~\cite{BBE10,BBEKW10} it was shown that both players always have pure and memoryless
optimal strategies in $\CN$-games, and the optimal value is always rational and computable.
Observe that the termination values, $\val{s_n}$, for a fixed $s\in Q$, are non-increasing with
increasing $n$. Thus their limit exists, and, in fact, it is an easy exercise to
employ the results of~\cite{BBE10,BBEKW10} to prove that the $\CN$-value of a control state, $s$,
equals $\lim_{n\to\infty}\val{s_n}$.
Intuitively this is because, with increasing the initial counter, $n$, the objective
of reaching $0$ becomes more and more similar to the $\CN$ objective.
Thus the set of $\CN$-values of all states $s\in Q$ contains
the set of all accumulation points of the termination values.
It is also possible to decide in time polynomial in $|Q|$
whether a $\CN$-value, $\nu$, actually is an accumulation point,
i.e., whether for all states, $s$, with $\CN$-value $\nu$
the limit of termination values stabilises after finitely many steps.

\begin{corollary}
\label{cor:oc}
Let $\G$ be an OC-SSG with the set $Q$ of control states.
If for every $s\in Q$ the $\CN$-value of $s$ is $1$ or $0$,
then Player Max has an optimal strategy for termination in $\G$.
\end{corollary}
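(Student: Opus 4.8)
The plan is to show that the OC-SSG $\G$ satisfies condition~(\ref{eq:acc-zero}) and then to invoke Theorem~\ref{thm:opt}. Note first that the termination objective is a reachability condition (reach the set $\{s_0\mid s\in Q\}$), so $W$ is open and Theorem~\ref{thm:opt} is indeed applicable once~(\ref{eq:acc-zero}) has been verified.

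To verify~(\ref{eq:acc-zero}), I would use the two facts about OC-SSGs already recorded in this section: for every fixed control state $s\in Q$ the termination values $\val{s_n}$ are non-increasing in $n$, and $\lim_{n\to\infty}\val{s_n}$ equals the $\CN$-value of $s$. Fix $s\in Q$ and split on its $\CN$-value, which by hypothesis is $0$ or $1$. If it is $1$, then since $\val{s_n}\le 1$ for all $n$ and the sequence is non-increasing with limit $1$, necessarily $\val{s_n}=1$ for every $n$, so $\{\val{s_n}\mid n\ge 0\}=\{1\}$ is finite. If the $\CN$-value of $s$ is $0$, then $\val{s_n}\to 0$, so for any $\eps>0$ there is some $N_s$ with $\val{s_n}<\eps$ for all $n\ge N_s$; hence $\{\val{s_n}\mid n\ge 0,\ \val{s_n}\ge\eps\}\subseteq\{\val{s_0},\dots,\val{s_{N_s-1}}\}$, a finite set.

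Since $Q$ is finite and every state of $\G$ has the form $s_n$ with $s\in Q$ and $n\ge 0$, for every $\eps>0$ the set $V_\eps$ is contained in a finite union over $s\in Q$ of the finite sets above, hence is itself finite. This is precisely~(\ref{eq:acc-zero}), so Theorem~\ref{thm:opt} yields an optimal strategy for Player Max in every state of $\G$, in particular an optimal strategy for termination.

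I do not expect any real obstacle: the whole content of the corollary is the verification of the accumulation-point condition, which is immediate from monotonicity of $\val{s_n}$ in $n$ together with the identification of the limit with the $\CN$-value. The latter identification is the only point where the results of~\cite{BBE10,BBEKW10} on $\CN$-games are invoked, and it is quoted here as already established in the surrounding discussion.
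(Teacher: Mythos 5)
Your proposal is correct and follows essentially the same route as the paper: both arguments use the monotonicity $\val{s_n}\geq\val{s_{n+1}}$ together with the identification of $\lim_{n\to\infty}\val{s_n}$ with the $\CN$-value to conclude that $1$ cannot be an accumulation point of the values (and that $0$ is the only possible one), and then invoke Theorem~\ref{thm:opt}. Your write-up merely spells out in more detail than the paper why $V_\eps$ is finite for each $\eps>0$, using the finiteness of $Q$.
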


\begin{proof}
The limits of termination values are approached from above,
because $\val{s_n}\geq\val{s_{n+1}}$ for all $s\in Q$ and all $n\geq 0$.
Thus, $1$ is not an accumulation point, and
we may apply Theorem~\ref{thm:opt}.
\end{proof}

Note that the class of OC-SSGs satisfying the condition of Corollary~\ref{cor:oc}
involves all OC-SSGs where
the graph of rules is strongly connected,
and one of the players is missing.
This is because $\CN$ is a prefix independent objective,
and the strong connectivity allows the only player to reach
each control state almost surely, thus all control
states have the same $\CN$-value. By results of~\cite[Theorem~3.2]{GH-SODA10},
such a common value can only be $0$ or $1$.
In particular, Corollary~\ref{cor:oc} covers both
the Solvency games, see~\cite{BKSV08}, and their maximizing variant.

In Solvency games, a gambler has an initial positive amount of money,
and in each step chooses one of finitely many actions. Each
action is associated with a distribution on a finite set of
integers. A number from this set is then sampled, and added
to the sum of money owned by the gambler (it can be, however, negative),
and the process ends only when the wealth becomes $\leq0$.
This is easily modelled by one-player OC-SSGs (see~\cite{BBEKW10}),
and these have strongly connected graphs of rules, because
the only state where the gambler chooses the action, is reachable
from all other states.
The natural scenario is, obviously, with Player Min for these games,
and there the existence of optimal strategies follows
from the finite branching. However, the dual situation, with Player
Max, is theoretically interesting as well, and we are not
aware of any result prior to our Corollary~\ref{cor:oc}, indicating the existence of optimal
strategies for Player Max.

\subsection{Analysis of Example~\ref{ex:noopt}}
Consider an arbitrary $n\geq 1$.
It is easy to see that $\val{r_n}=\frac{1}{2}$.
Observe that starting in $u_n$,
$s_{n+1}$ is visited with probability
$\sum_{i=0}^\infty2^{-1-2i}=\frac{2}{3}$, and $s_{n-1}$
with probability $\frac{1}{3}$.

\begin{lemma}
\label{lem:prob-oc}
For the unique strategy, $\sigma$, not using
transitions $s_n\tran{} r_n,\ n\geq1$, we have
\(
\Pr{s_i}\sigma{W}=2^{-i}.
\)
\end{lemma}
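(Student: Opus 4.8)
The plan is to analyze the Markov chain induced by the strategy $\sigma$ directly. Under $\sigma$, Player Max never takes $s_n \tran{} r_n$, so from $s_n$ (with $n\geq 1$) the play deterministically goes $s_n \tran{} u_n$, and then the chain evolves on the states $\{u_m, d_m\}$ until it hits some $s_{n+1}$ or $s_{n-1}$ (or gets absorbed). Using the preliminary observation already recorded just before the lemma, starting from $u_n$ the chain reaches $s_{n+1}$ with probability $\tfrac{2}{3}$ and $s_{n-1}$ with probability $\tfrac{1}{3}$. Hence, writing $p_i \coloneqq \Pr{s_i}{\sigma}{W}$ for $i\geq 1$, we get the recurrence
\[
p_i = \tfrac{2}{3}\, p_{i+1} + \tfrac{1}{3}\, p_{i-1},
\qquad i\geq 1,
\]
together with the boundary datum $p_0 = \Pr{s_0}{\sigma}{W} = 1$, since $s_0$ is a target state. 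The target set $W$ is never reached via the $r,z,t$ branch under $\sigma$ (that branch is avoided entirely), so the only way to win is to drift down to $s_0$.

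Next I would solve this recurrence. Rewriting it as $2p_{i+1} - 3p_i + p_{i-1} = 0$, the characteristic equation $2x^2 - 3x + 1 = 0$ has roots $x = 1$ and $x = \tfrac12$, so $p_i = A + B\,2^{-i}$ for constants $A, B$. The boundary condition $p_0 = 1$ gives $A + B = 1$. To pin down the second constant, I would argue that $p_i \to 0$ as $i\to\infty$: from $s_i$ the chain has a strong upward drift (probability $\tfrac23$ of increasing the index versus $\tfrac13$ of decreasing it), so with probability $1$ the index tends to $+\infty$ and $s_0$ is never hit; this is the standard gambler's-ruin / transient-random-walk computation, and it forces $A = 0$, hence $B = 1$ and $p_i = 2^{-i}$, which is exactly the claim. (Equivalently, one can avoid the limiting argument by computing, via gambler's ruin on the biased walk, the probability of ever reaching index $0$ from index $i$ before escaping to infinity, which is $(\,(1/2)/(1)\,)^{i}$-type and again yields $2^{-i}$.)

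The main obstacle is the justification that $p_i \to 0$, i.e., ruling out the constant solution $p_i \equiv 1$: the recurrence alone, with only the single boundary value at $i=0$, does not determine $p_i$, and one genuinely needs the transience of the induced walk (or a direct first-passage computation) to select $B=1$. Everything else — setting up the recurrence from the already-stated hitting probabilities $\tfrac23, \tfrac13$, solving the linear recurrence, and checking $p_0=1$ — is routine. A minor additional point to state explicitly is that the excursions on $\{u_m,d_m\}$ terminate almost surely (so that "reaches $s_{n\pm1}$ with probability $\tfrac23$ and $\tfrac13$" indeed accounts for probability $1$), which follows from the same geometric-series computation already given in the text.
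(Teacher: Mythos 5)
Your proof is correct, and it resolves the one genuinely non-routine step by a different device than the paper. Both arguments set up the same linear recurrence $p_i=\tfrac23 p_{i+1}+\tfrac13 p_{i-1}$ from the hitting probabilities $\tfrac23,\tfrac13$ recorded before the lemma, and both use $p_0=1$; the difference is how the second degree of freedom in the general solution $A+B\,2^{-i}$ is eliminated. You impose the boundary condition at infinity, $p_i\to 0$, justified by transience of the upward-biased level walk (equivalently, the gambler's-ruin first-passage probability $(q/p)^i=2^{-i}$), which forces $A=0$. The paper instead pins down $p_1$ directly: it invokes the least-fixed-point characterisation of termination probabilities for probabilistic one-counter/pushdown systems (citing \cite{EKM04,EY05stacs}) to identify $\Pr{s_1}\sigma{W}$ as the \emph{least} non-negative root of the quadratic $x=\tfrac13+\tfrac23x^2$, namely $\tfrac12$ (the paper's displayed equation $x=\tfrac23+\tfrac{x^2}{3}$ is evidently a typo, since its least root is $1$, not the claimed $\tfrac12$; the intended equation is the one just given), and then solves the recurrence with the two initial conditions $p_0=1$, $p_1=\tfrac12$. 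The two routes are logically equivalent ways of selecting the minimal solution of the same system: ``least fixed point'' and ``vanishing at infinity'' both exclude the spurious constant solution $p_i\equiv 1$. Your version is more self-contained probabilistically but requires you to actually prove transience (SLLN or a direct first-passage computation); the paper's version outsources exactly that selection principle to a citation. You correctly flag the need to check that the excursions on $\{u_m,d_m\}$ terminate almost surely and that under $\sigma$ the only reachable target is $s_0$; both points are fine and implicit in the paper's preliminary observation.
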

\begin{proof}
Clearly
\(
\Pr{s_0}\sigma{W}=1=2^{-0}.
\)
Further, the assignment
\(
x
\coloneqq
\Pr{s_1}\sigma{W}
\)
is the least non-negative solution
of the equation $x = \frac{2}{3} + \frac{x^2}{3}$,
see, e.g.,~\cite[Theorem~3.4]{EKM04} or \cite[Theorem~1]{EY05stacs},
which is $\frac{1}{2}$.
Solving the recurrence
\(
\Pr{s_i}\sigma{W}
=
\frac{2}{3}
\cdot
\Pr{s_{i-1}}\sigma{W}
+
\frac{1}{3}
\cdot
\Pr{s_{i+1}}\sigma{W}
,
\)
given the initial conditions for $i=0,1$,
yields
\(
\Pr{s_i}\sigma{W}=2^{-i}.
\)
\end{proof}

\begin{lemma}
\label{lem:val}
$\val{s_1}=\frac{3}{4}$.
\end{lemma}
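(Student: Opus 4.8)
The plan is to compute $\val{s_1}$ directly by analyzing the single choice Player Max has at $s_1$: either take the transition $s_1 \tran{} r_1$, leading to $r_1$ with value $\frac{1}{2}$, or take the transition $s_1 \tran{} u_1$, staying in the ``counting'' part of the game. Since $\val{s_1}$ is the supremum over strategies of the winning probability, and the only place Max acts (among states reachable with positive probability from $s_1$) is at the states $s_n$, I would argue that $\val{s_1} = \max\{\frac{1}{2}, (\text{value of going up})\}$, and more carefully that the optimal behaviour is to go to $r_1$ immediately, giving $\val{s_1} \geq \frac{1}{2}$, but that one can do better by first climbing.

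Concretely, I would first record that from $u_n$ one reaches $s_{n+1}$ with probability $\frac{2}{3}$ and $s_{n-1}$ with probability $\frac{1}{3}$ (already stated before the lemma), and that $\val{r_n} = \frac{1}{2}$ for all $n \geq 1$. Then I would set up the Bellman equations from Fact~\ref{fa:fixp} for the states $s_n$: writing $v_n \coloneqq \val{s_n}$, we have $v_0 = 1$, and for $n \geq 1$,
\[
v_n = \max\Bigl\{\tfrac{1}{2},\ \tfrac{2}{3} v_{n+1} + \tfrac{1}{3} v_{n-1}\Bigr\},
\]
since at $s_n$ Max chooses between $r_n$ (value $\frac12$) and $u_n$ (value $\frac23 v_{n+1} + \frac13 v_{n-1}$). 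The value $\val{\cdot}$ is the \emph{least} fixed point of this system. I would then guess the solution $v_n = \frac{2^n+1}{2^{n+1}}$ (as announced in Example~\ref{ex:noopt}), check that it satisfies $v_0 = 1$ and that $\frac{2}{3} v_{n+1} + \frac{1}{3} v_{n-1} \geq \frac12$ for all $n \geq 1$ so that the $\max$ is always attained by the ``go up'' branch, and verify the recurrence $v_n = \frac23 v_{n+1} + \frac13 v_{n-1}$ by direct substitution. For $n = 1$ this yields $\val{s_1} = \frac{2+1}{4} = \frac{3}{4}$.

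The one subtlety — and the main obstacle — is that the recurrence $v_n = \frac23 v_{n+1} + \frac13 v_{n-1}$ has a two-parameter family of solutions (general solution $a + b\cdot 2^{-n}$), so pinning down which one equals $\val{\cdot}$ requires the \emph{least fixed point} characterization, not just the fixed-point equation. I would handle this by noting that any fixed point $f$ with $f(s_0) = 1$ and $0 \leq f \leq 1$ must have the form $f(s_n) = a + b \cdot 2^{-n}$ on the $s_n$-chain with $a + b = 1$ and $0 \leq a \leq 1$ (boundedness forces $a \geq 0$; the constant term cannot exceed what is forced), and then observe that the least such is obtained by minimizing $a$ subject to the constraint that the ``go up'' branch dominates $\frac12$, i.e.\ $a + b\cdot 2^{-n} \geq$ the relevant bound — here $a = \frac12$ works and is minimal because pushing $a$ below $\frac12$ makes $\lim_n f(s_n) < \frac12$, contradicting $f(s_n) \geq \val{r_n} = \frac12$ (Max can always guarantee $\frac12$ by diving to $r_n$). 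Alternatively, and perhaps more cleanly, I can use Lemma~\ref{lem:prob-oc}: the strategy $\sigma$ that never uses $s_n \tran{} r_n$ gives winning probability $2^{-i}$ from $s_i$, so that strategy is clearly suboptimal, whereas the strategy always diving gives $\frac12$; combining $\val{s_i} \geq \frac12$ with an upper-bound argument (e.g.\ exhibiting $a = \frac12, b = \frac12$ as a genuine fixed point and invoking that $\val{\cdot}$ is \emph{below} every fixed point is the wrong direction — so instead show $\val{\cdot}$ equals this fixed point by the standard value-iteration / approximation argument) nails $\val{s_1} = \frac34$. I expect the cleanest writeup verifies $v_n = \frac{2^n+1}{2^{n+1}}$ is a fixed point, checks it dominates the $\frac12$ alternative, and then argues minimality among bounded fixed points with $v_0 = 1$ via the $\liminf \geq \frac12$ constraint.
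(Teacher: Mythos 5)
Your proposal is correct in substance but follows a genuinely different route from the paper. The paper proves the two inequalities separately and concretely: for $\val{s_1}\geq\frac34$ it exhibits the explicit family of strategies $\sigma_n$ (climb to level $n$, then dive to $r_n$), computes the reaching probabilities $p_i=\frac{2^{i-1}}{2^i-1}$, and lets $n\to\infty$; for $\val{s_1}\leq\frac34$ it conditions an arbitrary strategy on whether some $r_j$ is ever visited and combines $p_b=p_c/2$ with the bound $p_a\leq\frac12$ supplied by Lemma~\ref{lem:prob-oc}. You instead work through the Bellman operator $\V$ of Fact~\ref{fa:fixp}: eliminating $u_n,d_n$ gives $v_n=\max\{\frac12,\frac23 v_{n+1}+\frac13 v_{n-1}\}$; since $\val{s_{n\pm1}}\geq\val{r_{n\pm1}}=\frac12$, the ``up'' branch always dominates, so the value satisfies the linear recurrence and hence $v_n=a+(1-a)2^{-n}$ with $a=\lim_n v_n\geq\frac12$, which already yields $v_1=\frac{1+a}{2}\geq\frac34$; exhibiting $f(s_n)=\frac12+\frac12\cdot2^{-n}$ (suitably extended to the remaining states) as a genuine fixed point of $\V$ then gives $v_1\leq\frac34$ by leastness. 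Your approach buys all the values $\val{s_i}=\frac{2^i+1}{2^{i+1}}$ in one stroke and dispenses with the ad hoc conditioning; the paper's argument is more elementary and reuses Lemma~\ref{lem:prob-oc} directly. One point to tighten: your claim that \emph{every} bounded fixed point with $f(s_0)=1$ has the form $a+b\cdot2^{-n}$ on the chain is neither needed nor obviously true (a fixed point could in principle have the $r_n$-branch attain the max somewhere). What you actually use is (i) that the \emph{value} satisfies the recurrence, justified by $\val{s_n}\geq\frac12$, and (ii) that one explicit fixed point equals $\frac12+\frac12\cdot2^{-n}$, which by the least-fixed-point property is exactly the right direction for the upper bound --- your momentary worry that ``below every fixed point is the wrong direction'' is unfounded.
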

\begin{proof}
First we prove
$\val{s_1}\geq\frac{3}{4}$.
For any $n$ consider the memoryless strategy, $\sigma_n$,
given by
$\sigma_n(s_i)(u_i)=1$ if $i<n$ and 
$\sigma_n(s_i)(r_i)=1$ if $i\geq n$.
Set
\(
p_i \coloneqq \Pr{s_1}{\sigma_i}{\text{Reach $s_i$}}.
\)
Observe that
$p_i$ does not change if we define it using any $\sigma_n$
with $n\geq i$, and that
\(
1-p_i = \Pr{s_1}{\sigma_n}{W\land\neg\text{Reach $s_i$}}
\)
for $n\geq i$.
Moreover, $p_1=1$ and
\(
p_{i+1}
\coloneqq
\frac{2}{3} \cdot \left( p_i + (1-p_i)\cdot p_{i+1}\right).
\)
This uniquely determines that
\(
p_i = \frac{2^{i-1}}{2^i-1}.
\)
Finally, observe that
\(
\Pr{s_1}{\sigma_n}{W}
=
(1-p_n) + p_n\cdot\frac{1}{2}
,
\)
thus
\(
\val{s_1}
\geq
\lim_{n\to\infty}
(1-p_n) + p_n\cdot\frac{1}{2}
=
\frac{3}{4}
.
\)

Now we prove that
$\val{s_1}\leq\frac{3}{4}$
by proving
\(
\Pr{s_1}{\sigma}{W}
\leq
\frac{3}{4}
\)
for all $\sigma$.
Consider the following probabilities:
\(
p_a \coloneqq
\Pr{s_1}{\sigma}{W \land \neg\text{Reach some $r_j$}}
\),
\(
p_b \coloneqq
\Pr{s_1}{\sigma}{W \land \text{Reach some $r_j$}}
\),
\(
p_c \coloneqq
\Pr{s_1}{\sigma}{\text{Reach some $r_j$}}
\).
Clearly $p_b = \frac{p_c}{2}$.
Due to Lemma~\ref{lem:prob-oc} applied to $i=1$ we also have that $p_a \leq \frac{1}{2}$.
Finally, $p_a+p_c\leq 1$ since the events are disjoint.
We conclude that
\(
\Pr{s_1}{\sigma}{W}
=
p_a+p_b
\leq
p_a + \frac{1}{2}\cdot(1-p_a)
=
\frac{1}{2}\cdot p_a
+
\frac{1}{2}
\leq
\frac{3}{4}
.
\)
\end{proof}

\begin{lemma}
\(
\val{s_i} = \frac{2^i+1}{2^{i+1}}
\)
for all $i\geq0$.
\end{lemma}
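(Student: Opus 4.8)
The plan is to prove $\val{s_i}=\frac{2^i+1}{2^{i+1}}$ for all $i\geq 0$ by combining the two pieces already established for small $i$ with an induction that propagates the value upward. The base cases are immediate: $\val{s_0}=1=\frac{2^0+1}{2^1}$ since $s_0$ is absorbing and grey (in the target set), and $\val{s_1}=\frac{3}{4}=\frac{2^1+1}{2^2}$ by Lemma~\ref{lem:val}. For the inductive step I would exploit the local structure of the game graph at a state $s_n$ with $n\geq 1$: Player Max at $s_n$ chooses between the transition $s_n\tran{} u_n$ and the transition $s_n\tran{} r_n$. Taking $s_n\tran{} r_n$ leads to $r_n$ with $\val{r_n}=\frac12$. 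Taking $s_n\tran{} u_n$ leads, after the random walk on the $u$/$d$ gadget, to $s_{n+1}$ with probability $\frac23$ and to $s_{n-1}$ with probability $\frac13$ (as recorded just before Lemma~\ref{lem:prob-oc}). Hence by Fact~\ref{fa:fixp} the value satisfies
\[
\val{s_n}=\max\left\{\tfrac12,\ \tfrac23\val{s_{n+1}}+\tfrac13\val{s_{n-1}}\right\}.
\]

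Next I would argue that at every $s_n$ with $n\geq 1$ the $u$-branch is at least as good as the $r$-branch, so the $\max$ is always attained by the first argument; equivalently, $\val{s_n}\geq\frac12$ for all $n$, which follows because Max can always just move $s_n\tran{} r_n$ and win with probability exactly $\frac12$ (or, more simply, because $\frac{2^n+1}{2^{n+1}}>\frac12$). Therefore the values form a bounded solution of the linear recurrence
\[
\val{s_{n+1}}=\tfrac32\val{s_n}-\tfrac12\val{s_{n-1}}\qquad(n\geq 1),
\]
with the initial data $\val{s_0}=1$, $\val{s_1}=\frac34$. The characteristic equation is $2x^2-3x+1=0$ with roots $x=1$ and $x=\frac12$, so the general solution is $A+B\cdot 2^{-n}$; solving $A+B=1$ and $A+\frac B2=\frac34$ gives $A=\frac12$, $B=\frac12$, i.e. $\val{s_n}=\frac12+\frac12\cdot 2^{-n}=\frac{2^n+1}{2^{n+1}}$, as claimed. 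One must check that this particular bounded solution is the correct one, i.e. that the Bellman least-fixed-point characterisation (Fact~\ref{fa:fixp}) forces it: since every bounded solution of the recurrence with the given two initial values is uniquely $\frac{2^n+1}{2^{n+1}}$, and $\val{\cdot}$ is one such bounded solution, uniqueness closes the argument. (Boundedness is automatic because all values lie in $[0,1]$.)

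The main obstacle, and the step I would be most careful about, is justifying that the $\max$ in the Bellman equation is never won by the $\frac12$ branch in a way that breaks the clean linear recurrence — i.e. ruling out the possibility that for some $n$ one has $\val{s_n}=\frac12>\frac23\val{s_{n+1}}+\frac13\val{s_{n-1}}$. The safe way to handle this is to first prove by a separate induction that $\val{s_n}\geq\frac12+\frac12\cdot 2^{-n}$ for all $n$ (using the memoryless strategies $\sigma_n$ from the proof of Lemma~\ref{lem:val}, exactly as in the $\val{s_1}\geq\frac34$ argument, generalised to arbitrary starting index $i$), which already shows $\val{s_n}>\frac12$, and then prove the matching upper bound $\val{s_n}\leq\frac12+\frac12\cdot 2^{-n}$ by the argument in the second half of the proof of Lemma~\ref{lem:val} applied from $s_i$ instead of $s_1$ (again invoking Lemma~\ref{lem:prob-oc} with the appropriate index to get the analogue of $p_a\leq 2^{-i}$). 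The two bounds together give the formula directly and bypass any delicate case analysis of the $\max$. I expect the upper-bound direction to require the most bookkeeping, since it reuses the decomposition of $W$ into the events ``win without ever reaching any $r_j$'' and ``win after reaching some $r_j$,'' and one must track how the weight $p_a$ scales with the starting index $i$; but this is routine given Lemmas~\ref{lem:prob-oc} and~\ref{lem:val}.
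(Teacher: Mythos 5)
Your proposal is correct and follows essentially the same route as the paper: base cases $i=0,1$ (the latter via Lemma~\ref{lem:val}) followed by solving the linear recurrence $\val{s_i}=\tfrac{2}{3}\val{s_{i+1}}+\tfrac{1}{3}\val{s_{i-1}}$ with characteristic roots $1$ and $\tfrac12$. In fact you are more careful than the paper on the one genuine subtlety -- justifying that the Bellman $\max$ at $s_n$ is attained by the $u$-branch (which follows from $\val{s_j}\geq\tfrac12$ for all $j$) -- a point the paper's proof silently assumes; your fallback via two-sided bounds generalising Lemma~\ref{lem:val} is a valid alternative but is not needed once that observation is made.
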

\begin{proof}
The case $i=0$ is trivial, and $i=1$ is Lemma~\ref{lem:val}.
Solving the recurrence
\(
\val{s_i}
=
\frac{2}{3}
\cdot
\val{s_{i-1}}
+
\frac{1}{3}
\cdot
\val{s_{i+1}}
,
\)
given the initial conditions for $i=0,1$,
yields
\(
\val{s_i} = \frac{2^i+1}{2^{i+1}}
.
\)
\end{proof}

In particular, for all $i\geq 1$, $\val{s_i} > \val{r_i}$,
thus no optimal strategy may use transitions $s_n\tran{} r_n,\ n\geq1$.
By Lemma~\ref{lem:prob-oc}, there are no optimal strategies in $s_i$.

\bibliographystyle{eptcs}
\bibliography{bibliography}
\end{document}